\definecolor{note_fontcolor}{rgb}{0.39453125, 0.39453125, 0.39453125}
\newenvironment{lyxgreyedout}
  {\textcolor{note_fontcolor}\bgroup\ignorespaces}
  {\ignorespacesafterend\egroup}
\providecommand{\algorithmname}{Algorithm}
\theoremstyle{plain}
\newtheorem{thm}{\protect\theoremname}
  \theoremstyle{definition}
  \newtheorem{defn}{\protect\definitionname}
  \theoremstyle{plain}
  \newtheorem{lem}{\protect\lemmaname}
  \theoremstyle{remark}
  \newtheorem{rem}{\protect\remarkname}
  \theoremstyle{plain}
  \newtheorem{cor}{\protect\corollaryname}
\definecolor{darkgreen}{rgb}{0,0.4,0}
\definecolor{darkblue}{rgb}{0,0,0.6}
\definecolor{Yellow}{rgb}{0.2,0.2,0.2}
  \providecommand{\definitionname}{Definition}
  \providecommand{\lemmaname}{Lemma}
  \providecommand{\remarkname}{Remark}
\providecommand{\corollaryname}{Corollary}
\providecommand{\theoremname}{Theorem}
\begin{document}

\title{Dynamic Complexity of Planar 3-connected Graph Isomorphism}

\author{Jenish C. Mehta%
\thanks{This work was done while the author was interning at Chennai Mathematical
Institute in May-June, 2010%
}}
\maketitle
\begin{abstract}
Dynamic Complexity (as introduced by Patnaik and Immerman \cite{patnaik1994dyn})
tries to express how hard it is to \emph{update} the solution to a
problem when the input is changed \emph{slightly}. It considers the
\emph{changes} required to some stored data structure (possibly a
massive database) as small quantities of data (or a tuple) are inserted
or deleted from the database (or a structure over some vocabulary).
The main difference from previous notions of dynamic complexity is
that instead of treating the update quantitatively by finding the
the time/space trade-offs, it tries to consider the update \emph{qualitatively},
by finding the \emph{complexity class }in which the update can be
expressed (or made). In this setting, $\textsf{\textsf{DynFO}}$,
or Dynamic First-Order, is one of the smallest and the most natural
complexity class (since SQL queries can be expressed in First-Order
Logic), and contains those problems whose solutions (or the stored
data structure from which the solution can be found) can be updated
in First-Order Logic when the data structure undergoes small changes.

Etessami \cite{etessami1998dynamic} considered the problem of isomorphism
in the dynamic setting, and showed that Tree Isomorphism can be decided
in $\textsf{DynFO}$. In this work, we show that isomorphism of Planar
3-connected graphs can be decided in $\textsf{DynFO}^{+}$ (which
is $\textsf{\textsf{\textsf{\textsf{DynFO}}}}$ with some polynomial
precomputation). We maintain a canonical description of 3-connected
Planar graphs by maintaining a database which is accessed and modified
by First-Order queries when edges are added to or deleted from the
graph. We specifically exploit the ideas of Breadth-First Search and
Canonical Breadth-First Search to prove the results. We also introduce
a novel method for canonizing a 3-connected planar graph in First-Order
Logic from Canonical Breadth-First Search Trees.
\end{abstract}

\section{Introduction}

Consider the problem $\textsc{lis(A)}$ of finding the longest increasing
subsequence of a sequence (or array) of $n$ numbers $\textsc{A}$.
The ``template'' dynamic programming polynomial time solution proceeds
by subsequently finding and storing $\textsc{lis(A[1:\ensuremath{i}])}$
- the longest increasing subsequence of numbers from 1 to $i$ that
necessarily ends with the $i$'th number. $\textsc{lis(A[1:\ensuremath{i+1}])}$
is found, given $\textsc{lis(A[1:1])}$ to $\textsc{lis(A[1:\ensuremath{i}])}$,
by simply finding the maximum sequence formed by possibly appending
$\textsc{A[\ensuremath{i+1}]}$ to the largest subsequence from $\textsc{lis(A[1:1])}$
to $\textsc{lis(A[1:\ensuremath{i}])}$. 

This \emph{paradigm} of dynamic programming (or incremental thinking)\emph{,
}of \emph{storing} information using polynomial space, and \emph{updating}
it to get the required results, is neatly captured in the Dynamic
Complexity framework introduced by Patnaik and Immerman \cite{patnaik1994dyn}.
Broadly, Dynamic Complexity tries to measure or express how hard it
is to \emph{update} some stored information, so that some required
query can be answered. For instance, for some graph problem, like
reachability, it tries to measure (or express) how hard it is to update
some stored information when an edge is inserted or deleted from the
graph, so that the required query, like reachability between two vertices
$s$ and $t$, can be answered easily from the stored information.
Essentially, it asks how hard is one step of induction, or how hard
it is to update one step of some recurrence. 

This Dynamic Complexity framework (as in \cite{patnaik1994dyn}) differs
from other notions in two ways. For some problem (say a graph theoretic
problem like colorability or reachability), the traditional notions
of the dynamic complexity try to measure the amount of \emph{time
}and \emph{space }required to make some update to the problem (like
inserting/deleting edges from a graph or inserting/deleting tuples
from a database), and the trade-offs between the two. Dynamic Complexity,
instead tries to measure (or express) the resources required for an
update \emph{qualitatively}. Hence, it tries to measure an update
by the \emph{complexity class} in which it lies, rather than the explicit
time/space requirements. For any static complexity class $\mbox{\textsf{C}}$,
informally, the dynamic complexity class $\textsf{DynC}$ consists
of the set of problems, such that any \emph{update} (to be defined
formally later) to the problem can be expressed in the compexity class
$\textsf{C}$. A bit more formally, a language $L$ is in the dynamic
complexity class $\textsf{DynC}$ if we can maintain a tuple of relations
(say $T$) for deciding the language in $\textsf{C}$, such that after
any insertion or deletion of a tuple to the relations, they can be
effectively updated in the complexity class $\textsf{C}$ (updation
is required so that even after the insertion/deletion of the tuple,
they decide the same language \textbf{$L$}).

Another difference is that it treats the complexity classes in a Descriptive
manner (using the language of Finite Model Theory) rather than the
standard Turing manner (defined by tapes and movement of pointers).
Since Descriptive Complexity tries to measure the hardness of \emph{expressing}
a problem rather than the hardness of \emph{finding} a solution to
the problem, Dynamic Complexity tries to measure how hard it is to
\emph{express} an update to some problem. Though, since either definition
- Descriptive or Turing - lead to complexity classes with the same
expressive power, any of the definitions remain valid. 

Consider the dynamic complexity class $\textsf{DynP}$ (or $\textsf{\textsf{\textsf{DynFO(LFP)}}}$).
Intuitively, it permits storage of a polynomial amount of information
(generated in polynomial time), so that (for some problem) the information
during any update can be modified in $\textsf{P}$. Observe that the
above problem of $\textsc{lis(A)}$ lies in $\textsf{DynP}$, since
at every stage we stored a polynomial amount of information, and the
\emph{update }step took polynomial time to modify the information.

Although we do not consider relations between static and dynamic complexity
classes here, it is worth mentioning that $\textsf{DynP=P}$ (under
a suitable notion of a reduction). Hence, unless $\textsf{P=NP}$,
it is not possible to store some polynomial amount of information
(generated in polynomial time), so that insertion of a single edge
in a graph or a single clause in a 3-SAT expression (over a fixed
set of variables), leads to finding whether the graph is 3-colorable
or whether the 3-SAT expression is satisfiable. As another illustration,
for the $\textsf{NP}$-complete problem of finding the longest path
between any two vertices in an $n$-vertex undirected graph, even
if we are given any kind of (polynomial) information%
\footnote{By polynomial information, we mean information that has been generated
in polynomial time, and after the insertion of an edge, it can be
regenerated (in polynomial time) so as to allow insertion of another
edge, and so on ad infinitum.%
}, including the longest path between all possible pairs of vertices
in the given graph, it is not possible to find the \emph{new} longest
path between any pair of vertices when a single edge is inserted to
the graph (unless $\textsf{P=NP}$). This means that $\textsf{NP}$-complete
problems are even hard to simply update, i.e, even a small update
to an $\textsf{NP}$-complete problem cannot be done in polynomial
time. The reader is referred to \cite{hesse2003dynamic} for complete
problems for $\textsf{DynFO}$ and for reductions among problems in
the dynamic setting.

Although a dynamic programming solution to any problem is in effect
a $\textsf{DynP}$ solution, the class $\textsf{DynP}$ is less interesting
since it is essentially same as $\textsf{P}$. More interesting classes
are primarily the dynamic versions of smaller circuit complexity classes
inside $\textsf{P}$, like $\textsf{DynNC}^{1}$, $\textsf{DynTC}^{0}$,
etc. The most interesting, and perhaps the smallest dynamic complexity
class, is $\textsf{\textsf{\textsf{DynFO}}}$. Intuitively, $\textsf{\textsf{\textsf{DynFO}}}$
or Dynamic First-Order is the set of problems for which a polynomial
sized database of information can be stored to answer the problem
query (like reachability), such that after any insertion/deletion
of a tuple, the database can be updated using merely a $\textsf{FO}$
query (i.e. in First-Order Logic). A problem being in $\textsf{\textsf{\textsf{DynFO}}}$
means that any updation to the problem is extremely easy in some sense. 

Another reason why $\textsf{\textsf{\textsf{DynFO}}}$ is important
is because it is closely related to practice. A limitation of static
complexity classes is that they are not appropriate for systems where
large amounts of data are to be queried. Most real-life problems are
dynamic, extending over extremely long periods of time, manipulating
stored data. In such systems, it is necessary that small perturbations
to massive quantities of data can be computed very fast, instead of
processing the data from scratch. Consider for instance, a massive
code that is dynamically compiled. We would expect that the compilation,
as letters are typed, should be done very fast, since only a small
part of the program is modified with every letter. Hence, for huge
continually changing databases (or Big-Data), it is not feasible to
re-compute a query all over again when a new tuple is inserted or
deleted to/from the database. For the problems in $\textsf{\textsf{\textsf{DynFO}}}$,
since an SQL query is essentially a $\textsf{FO}$ Query, an SQL query
can update the database without computing everything again. This is
very useful in dynamic settings. A nice exposition on $\textsf{\textsf{\textsf{DynFO}}}$
in this respect can be found in \cite{schwentick2013perspectives}. 

One basic problem considered in this setting is that of Reachability.
In \cite{patnaik1994dyn}, it was shown that Undirected Reachability
(which is in the static class $\textsf{L}$), lies in the complexity
class $\textsf{\textsf{\textsf{DynFO}}}$. Note how a simple class
like $\textsf{FOL}$, which does not even contain parity, becomes
powerfully expressive in the dynamic setting. Hesse \cite{hesse2002dynamic}
showed that Directed Reachablity lies in $\textsf{DynTC}^{0}$. Also,
Dong and Su \cite{dong1995incremental} further showed that Directed
Rechability for acyclic graphs lies in $\textsf{\textsf{\textsf{DynFO}}}$. 

The Graph Isomorphism problem (of finding a bijection between the
vertex sets of two graphs such that the adjacencies are preserved)
has so far been elusive to algorithmic efforts and has not yet yielded
a better than subexponential ($2^{o(n)}$) time static algorithm.
The general problem is in $\textsf{NP}$, and also in $\textsf{SPP}$
(Arvind and Kurur \cite{arvind2002graph}). Thus, various special
cases have been considered, one important case being restriction to
planar graphs. Hopcroft and Wong \cite{hopcroft1974linear} showed
that Planar Graph Isomorphism can be decided in linear time. In a
series of works, it was further shown that Tree Isomorphism is in
$\textsf{L}$ (Lindell \cite{lindell1992logspace}), 3-connected Planar
Graph Isomorphism is in $\textsf{L}$ (Datta et. al. \cite{datta20083})
and finally, Planar Graph Isomorphism is in $\textsf{L}$ (Datta et.
al. \cite{datta2009planar}). 

Etessami considered the problem of isomorphism in the dynamic setting.
It was shown in \cite{etessami1998dynamic} that Tree Isomorphism
can be decided in $\textsf{\textsf{\textsf{DynFO}}}$. 

In this work, we consider a natural extension and show that isomorphism
for Planar 3-connected graphs can be decided in $\textsf{\textsf{\textsf{DynFO}}}$
(with some polynomial precomputation). Our method of showing this
is different from that in \cite{etessami1998dynamic}. The main technical
tool we employ is that of Canonical Breadth-First Search trees (abbreviated
CBFS tree), which were used by Thierauf and Wagner \cite{thierauf2010isomorphism}
to show that 3-connected Planar Graph Isomorphism lies in $\textsf{UL}$.
We also introduce a novel method for finding the canon of a 3-connected
Planar graph from Canonical Breadth-First Search trees in First-Order
Logic ($\textsf{FOL}$). We finally compare the canons of the two
graphs to decide on isomorphism.\\

Our main results are:
\begin{enumerate}
\item Breadth-First Search for undirected graphs is in $\textsf{\textsf{\textsf{DynFO}}}$
\item Isomorphism for Planar 3-connected graphs is in $\textsf{\textsf{\textsf{DynFO}}}^{+}$
\end{enumerate}
$\textsf{\textsf{\textsf{DynFO}}}^{+}$ is exactly same as $\textsf{\textsf{\textsf{DynFO}}}$,
except that it allows some polynomial precomputation, which is necessary
until enough edges are inserted so that the graph becomes 3-connected.
Note that this is the best one can hope for, due to the requirement
of 3-connectivity. 

In Section 2, we give the preliminary definitions and necessary explanations.
In sections 3 and 4, we prove Result 1. In Section 5, we prove Result
2. In Section 6, we introduce a novel method of canonizing a planar
3-connected graph in $\textsf{FOL}$ from Canonical Breadth-First
Search trees. In section 7, we conclude with open problems and scope
for future work.

\section{Preliminaries}

\subsubsection*{I. On Graph Theory: }

The reader is referred to \cite{diestel2005graph} for the graph-theoretic
definitions in this section.

A graph $G=(V,E)$ is connected if there is a path between any two
vertices in $G$. A pair of vertices $u,v\in V$ is a separating pair
if $G(V\backslash\{u,v\})$ is not connected. A graph with no separating
pairs is 3-connected.

\label{preliminaries-I}Let $E_{v}$ be the set of edges incident
to $v$. A permutation $\pi_{v}$ on $E_{v}$ that has only one cycle
is called a rotation. A rotation scheme for a graph G is a set $\pi$
of rotations, $\pi=\{\pi_{v}\ |\ v\in V$ and $\pi_{v}$ is a rotation
on $E_{v}\}$. Let $\pi^{c}$ be the set of inverse rotations, $\pi^{c}=\{\pi_{v}^{c}\ |\ v\in V\}$.
A rotation scheme $\pi$ describes an embedding of graph $G$ in the
plane. $\pi$ is a \emph{planar rotation scheme} if the embedding
is planar.

A planar graph $G$, along with its planar embedding (given by $\pi$)
is called a plane graph $G=(G,\pi)$. A plane graph divides the plane
into regions. Each such region is called a face.

For 3-connected planar graphs, we shall asssume that $\pi$ is the
set of anti-clockwise rotations around each vertex, and $\pi^{c}$
is the set of clockwise rotations around every vertex. Whitney \cite{whitney1933set}
showed that $\pi$ and $\pi^{c}$ are the only two rotations for 3-connected
planar graphs.

Two graphs $G=(V_{G},E_{G})$ and $H=(V_{H},E_{H})$ are said to be
isomorphic $(G\cong H)$ if there is a bijection $\phi:V_{G}\rightarrow V_{H}$
such that $(u,v)\in E_{G}\Leftrightarrow(\phi(u),\phi(v))\in E_{H}$.

\subsubsection*{II. On Finite-Model Theory:}

Please refer to any text, like \cite{immerman1999descriptive} for
the definitions on Finite-Model Theory.

A \emph{vocabulary} $\tau$ = $\langle R_{1}^{a_{1}},...,R_{r}^{a_{r}},c_{1},...,c_{s}\rangle$
is a tuple of relation symbols and constant symbols. 

A \emph{structure} $A$ over $\tau$ is a tuple, $A=\langle|A|,R_{1}^{a_{1},A},...,R_{r}^{a_{r},A},c_{1}^{A},...,c_{s}^{A}\rangle$,
where $|A|=\{0,1,2,...,n-1\}$ is a fixed size universe of size $||A||=n$. 

Let $STRUC(\tau)$ denote all possible structures over $\tau$, then
$S\subseteq STRUC(\tau)$ is any complexity theoretic problem. Let
$S\subseteq STRUC(\sigma)$ and $T\subseteq STRUC(\tau)$ be two problems,
where $\tau$ and $\sigma$ are two vocabularies.

A First-Order (FO) query $I:STRUC(\sigma)\rightarrow STRUC(\tau)$
is a tuple of $r+s+1$ formulas, $\langle\varphi_{0}\ldots\varphi_{r},\psi_{1}\ldots\psi_{s}\rangle$. 

For each $A\in STRUC(\sigma)$, 
\[
I(A)=\langle|I(A)|,\ R_{1}^{a_{1},I(A)}...,R_{r}^{a_{r},I(A)},\ c_{1}^{I}(A),...,c_{s}^{I}(A)\rangle
\]
where $|I(A)|=\{\langle b^{1},...,b^{k}\rangle\mid A\models\varphi_{0}(b^{1},...,b_{k})\}$, 

\[
R_{i}^{I(A)}=\{(\langle b_{1}^{1},...,b_{1}^{k}\rangle,...,\langle b_{a_{i}}^{1},...,b_{a_{i}}^{k}\rangle)\in|I(A)|^{a_{i}}\text{ such that }A\models\varphi_{i}(b_{a_{i}}^{1},...,b_{a_{i}}^{k})\}
\]
 
\[
c_{i}^{I(A)}\text{ = the unique }\langle b^{1},...,b^{k}\rangle\in|I(A)|\text{ such that }A\models\psi_{i}(b^{1},...,b^{k})
\]
 letting the free variables of $\varphi_{i}$ be $x_{1}^{1},...,x_{1}^{k},...,x_{a_{i}}^{1},...,x_{a_{i}}^{k}$,
and of $\varphi_{0}$ and $\psi_{j}$'s be $x_{1}^{1},...,x_{1}^{k}$. 

We shall refer to the following theorem at certain places, and we
make it explicit here, which can be proven using Ehrenfeucht-Fraisse
Games \cite{immerman1999descriptive}:
\begin{thm}
\label{thm-Transitive-Closure-not-in-fol}Transitive Closure is not
in $\textsf{FOL}$ = uniform $\textsf{AC}^{0}$. 
\end{thm}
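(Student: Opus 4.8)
The plan is to argue by contradiction via an Ehrenfeucht--Fra\"iss\'e (EF) game. I would view directed reachability as a Boolean query on structures over the vocabulary $\langle E^{2},s,t\rangle$, where $E$ is the edge relation and $s,t$ are constants, the query holding on a graph $G$ exactly when there is a directed $E$-path from $s^{G}$ to $t^{G}$. Suppose, for contradiction, that some first-order sentence $\varphi$ defines this query, and let $k$ be its quantifier rank. By the classical EF theorem it then suffices to exhibit, for this one fixed $k$, two structures $\mathcal{A}$ and $\mathcal{B}$ that are $k$-equivalent --- Duplicator wins the $k$-round EF game on the pair --- but disagree on the query; such a pair contradicts the existence of $\varphi$.

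For the witnesses I would take $\mathcal{A}$ to be a single directed path on $N$ vertices with $s$ the source endpoint and $t$ the sink endpoint, so that $\mathcal{A}$ satisfies ``$t$ is reachable from $s$'', and $\mathcal{B}$ to be the disjoint union of two directed paths on $N$ vertices each, with $s$ the source of the first and $t$ the sink of the second, so that $\mathcal{B}$ falsifies the query. Choosing $N$ exponential in $k$ (say $N \ge 2^{k}$), the claim is that Duplicator wins the $k$-round game.

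The Duplicator strategy is the usual distance-halving argument. Viewing $s$ and $t$ as already pebbled before play begins, after round $i$, with pebble pairs $(a_{1},b_{1}),\dots,(a_{i},b_{i})$ placed, Duplicator maintains the invariant: (a) two pebbled elements lie on a common directed path in $\mathcal{A}$ iff their partners do in $\mathcal{B}$; (b) the linear order of the pebbles along each path is preserved; and (c) for any two pebbles on a common path, the directed $E$-distance between them is either equal in $\mathcal{A}$ and $\mathcal{B}$ or exceeds $2^{k-i}$ in both (with $\infty$ counted as exceeding every bound), and similarly for distances to the two ends of that path. The relevant gaps start at size $\ge 2^{k}$, and each round at worst halves the governing bound, so the invariant survives all $k$ rounds; at the end it forces the map $a_{j}\mapsto b_{j}$, together with $s\mapsto s$ and $t\mapsto t$, to be a partial isomorphism, so $\mathcal{A}\equiv_{k}\mathcal{B}$, contradicting the choice of $\varphi$. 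The identification of $\textsf{FOL}$ with uniform $\textsf{AC}^{0}$ in the statement is the standard one for ordered finite structures; the lower bound there can alternatively be obtained from the known circuit lower bounds, since the parity function is expressible by a transitive-closure query but is not in uniform $\textsf{AC}^{0}$. A reference for the game machinery is \cite{immerman1999descriptive}.

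The step that needs care, rather than any new idea, is the response rule when Spoiler places a pebble far --- at distance more than $2^{k-i}$ --- from every previously pebbled element: Duplicator must reply with a point of the other structure having the correct finite distance to one neighbouring pebble while remaining more than $2^{k-i-1}$ from all others and respecting the path-membership and order constraints. In particular, when Spoiler plays inside the ``$t$-component'' of $\mathcal{B}$ far from earlier pebbles, Duplicator must answer inside the single path of $\mathcal{A}$ near $t$, and symmetrically for the ``$s$-component''. Taking $N$ exponential in $k$ is exactly what guarantees that enough room remains on both sides throughout the $k$ rounds; carrying out this room-counting case by case is the only real obstacle, and it is routine bookkeeping.
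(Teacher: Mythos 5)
The paper itself gives no proof of this theorem; it is stated as a classical fact ``which can be proven using Ehrenfeucht--Fra\"iss\'e Games'' with a citation to Immerman's textbook, so your choice of an EF-game argument is exactly the route intended. The alternative you mention at the end (parity is FO-definable from transitive closure plus order, and parity is not in uniform $\textsf{AC}^{0}$) is also a legitimate and standard way to get the lower bound.

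However, the concrete witness structures you propose do not work, and the distance-halving strategy cannot rescue them. Take $\mathcal{A}$ to be a single directed path $v_{1}\to\cdots\to v_{N}$ with $s=v_{1}$, $t=v_{N}$, and $\mathcal{B}$ to be two disjoint directed paths $u_{1}\to\cdots\to u_{N}$ and $w_{1}\to\cdots\to w_{N}$ with $s=u_{1}$, $t=w_{N}$. In $\mathcal{A}$ the \emph{only} vertex with out-degree $0$ is the constant $t$, whereas in $\mathcal{B}$ the vertex $u_{N}$ also has out-degree $0$ and is distinct from $t$. Thus the quantifier-rank-$2$ sentence
$\exists x\,\bigl(x\neq t\ \wedge\ \neg\exists y\, E(x,y)\bigr)$
is true in $\mathcal{B}$ and false in $\mathcal{A}$, so $\mathcal{A}\not\equiv_{2}\mathcal{B}$ no matter how large you take $N$. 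Equivalently, Spoiler wins the EF game in two rounds: Spoiler pebbles $u_{N}$ in $\mathcal{B}$; Duplicator must answer with some $a\in\mathcal{A}$ with $a\neq t$ (and $a\neq s$); but then $a$ has an out-neighbour $a'$ in $\mathcal{A}$ which Spoiler pebbles next, and Duplicator has no out-neighbour of $u_{N}$ in $\mathcal{B}$ to respond with. The invariant in your strategy (``directed $E$-distance to the two ends of that path is either equal or large in both'') is violated immediately at the first move, because there is no end of the single path in $\mathcal{A}$ that corresponds to the non-constant end $u_{N}$ of $\mathcal{B}$.

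The repair is the classical one: use structures with no FO-detectable ``free'' endpoints. For undirected reachability, take $\mathcal{A}$ to be a single cycle of length $2N$ with $s,t$ antipodal, and $\mathcal{B}$ to be two disjoint cycles of length $N$ with $s$ on one and $t$ on the other; every vertex then has degree $2$, and the distance-halving argument goes through as you describe. For the directed version, use directed cycles similarly. Alternatively, one can close up each of your paths by adding a back-edge from the sink to the source, which makes all in- and out-degrees equal to $1$ and again makes the structures locally indistinguishable, after which your invariant holds. The rest of your write-up --- quantifier-rank bookkeeping, choice of $N\ge 2^{k}$, the closing remark about $\textsf{AC}^{0}$ --- is fine once the structures are fixed.
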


\subsubsection*{III. On Dynamic Complexity: }

Refer to \cite{immerman1999descriptive} or \cite{patnaik1994dyn}
for the following definition and relevant examples:
\begin{defn}
\label{def-dynamiccomplexityclass}For any static complexity class
$\textsf{C}$, we define its dynamic version, $\textsf{DynC}$ as
follows: Let $\rho=\langle R_{1}^{a_{1}},...,R_{s}^{a_{s}},c_{1},...,c_{t}\rangle$,
be any vocabulary and $S\subseteq STRUC(\rho)$ be any problem. Let
${\displaystyle R_{n,\rho}=\{ins(i,a'),\ del(i,a'),\ set(j,a)\ |\ 1\le i\le s,\ a'\in\{0,...,n-1\}^{a_{i}},\ 1\le j\le t\}}$
be the request to insert/delete tuple $a'$ into/from the relation
$R_{i}$, or set constant $c_{j}$ to $a$.
\end{defn}
Let $eval_{n,\rho}:R_{n,\rho}^{*}\ \rightarrow STRUC(\rho)$ be the
evaluation of a sequence or stream of requests. Define $S\in\textsf{DynC}$
iff there exists another problem $T\subset STRUC(\tau)$ (over some
vocabulary $\tau$) such that $T\in\textsf{C}$ and there exist maps
$f$ and $g$:

\[
f:R_{n,\rho}^{*}\rightarrow STRUC(\tau),\ g:STRUC(\tau)\times R_{n,\rho}\rightarrow STRUC(\tau)
\]

satisfying the following properties:
\begin{enumerate}
\item \textbf{\emph{(Correctness)}} For all $r'\in R_{n,\rho}^{*}$, $(eval_{n,\rho}(r')\ \in\ S)\Leftrightarrow(f(r')\in T)$
\item \textbf{\emph{(Update)}} For all $s\ \in R_{n,\rho}$, and $r'\in R_{n,\rho}^{*}$,
$f(r's)=g(f(r'),s)$
\item \textbf{\emph{(Bounded Universe) }}$||f(r')||\ =\ ||eval_{n,\rho}(r')||^{O(1)}$
\item \textbf{\emph{(Initialization)}} The functions $g$ and the initial
structure $f(\emptyset)$ are computable in $\textsf{C}$ as functions
of $n$.
\end{enumerate}
Our main aim is to define the update function $g$ (over some vocabulary
$\tau$). If condition (4) is relaxed, to the extent that the initializing
function $f$ may be polynomially computable (before any insertion
or deletion of tuples begin), the resulting class is $\textsf{DynC}^{+}$,
that is $\textsf{DynC}$ with polynomial precomputation.

\subsubsection*{IV. \label{preliminaries-part-IV}On $\textsf{DynFO}^{+}$:}

Here we shall explain the polynomial precomputation part of Definition
\ref{def-dynamiccomplexityclass} above with respect to the problem
of 3-connected Planar graph isomorphism.

Condition 4 in the Definition \ref{def-dynamiccomplexityclass} above
requires that the function $f$ be computable in the static complexity
class $\textsf{C}$ as a function of $n$. Relaxing that condition
implies that for static complexity classes $\textsf{C}$ that are
contained in $n^{O(1)}$, $f$ may not be computable in $\textsf{C}$,
but is atleast computable efficiently, i.e. polynomially or in $\textsf{FO(LFP)}$.

In the dynamic setting, edges are added (or removed) at every stage.
As such, the graph at any stage is either both planar and 3-connected
(state $\emph{A}$), or is neither planar nor 3-connected nor both
(state $\emph{B}$). Since we assume the conditions of planarity and
3-connectivity, our relations do not hold in state $\emph{B}$, but
only in state $\emph{A}$. 

We shall maintain a tuple of relations for the problem, call it $T$,
which need to be populated at every stage, be it $\emph{A}$ or $\emph{B}$.
When the edge insertion process begins for the first time, the graph
will be empty and in state $\emph{B}$. As edges are added and removed
from the graph, it will stay in state $\emph{B}$ until there are
sufficient edges to satisfy the constraints of 3-connectivity and
planarity, which will put the graph in state $\emph{A}$. Uptil now,
the computation for relations in $\emph{T}$ could not be done in
$\textsf{FOL}$, and as such, the relations cannot be maintained in
$\textsf{\textsf{\textsf{\textsf{\textsf{DynFO}}}}}$. Hence, during
state $\emph{B}$, the relations in $\emph{T}$ must be maintained
using polynomial queries, or queries in $\textsf{FO(LFP)}$. This
can easily be done since the problem of Planar Graph Isomorphism itself
is in Logspace \cite{datta2009planar}, and we omit its details. Hence,
polynomial precomputation is necessary for the function $f$ as in
Definition \ref{def-dynamiccomplexityclass} above when the graph
is in state $\emph{B}$, until it reaches state $\emph{A}$. Also,
if ever the graph goes into state $\emph{B}$ during insertions and
deletions, we again need to resort to polynomial queries.

Also, since no known algorithm exists in $\textsf{\textsf{\textsf{\textsf{\textsf{DynFO}}}}}$
to \emph{decide} whether a given graph is 3-connected and planar,
even this needs to be done polynomially.

Once the graph is in state $\emph{A}$ or both planar and 3-connected,
we will show the existence of $\emph{T}$ such that the canonical
description of the graph can be maintained in $\textsf{\textsf{\textsf{\textsf{\textsf{DynFO}}}}}$,
i.e. the canonical description can be maintained in $\textsf{FOL}$
for insertions/deletions of edges as long as the graph is in state
$\emph{A}$.

\subsubsection*{V. On Conventions:}

Throughout this paper, we adopt the following convention: if $R$
is any relation or any of our denotation, $R'$ denotes the updated
relation, or the denotation in the updated relation. Also, for any
query $Q$, $Eq(Q)$ will denote the equivalent query formed by replacing
all the $a$'s in $Q$ by $b$'s, and all the $b$'s in $Q$ by $a$'s.
For the ease of readability, we shall only write the queries in a
high-level form, and leave out their easy translation to the exact
form (which quickly turns non-elegant and lengthy). 

We may often use a statement of the form $\alpha\leftarrow\beta$,
i.e. we are assigning the value of $\beta$ to $\alpha$. In some
cases, we can only deal with relations, or sets of tuples and not
individual variables. Hence, we do this by creating some temporary
relation (say) $temp$, which contains only 1 element, i.e. $\alpha$.
Note that $\beta$ may itself be a first-order formula, or a first-order
statement whose result is just one element in the universe, i.e. $\alpha$.
After that, wherever we need to use $\alpha$, we use $\forall x,\ temp(x)$,
and since $temp$ contains only one element, i.e. $\alpha$.

For brevity of expressing relations, we may often use the following
short-hand notation, here shown for a relation $R$ of arity 4: 
\[
R(a_{1},a_{2},\{b_{1},b_{2},...,b_{k}\},a_{4})=\bigwedge_{i=1}^{k}R(a_{1},a_{2},b_{i},a_{4})
\]

Moreover, to prevent notational clutter from interfering with the
general conceptual flow of the paper, we relegate all queries to the
Appendix.

\section{Ordering and Arithmetic in $\textsf{\textsf{\textsf{DynFO}}}$}

In this section, we prove that both Ordering and Arithmetic can be
done in $\textsf{\textsf{\textsf{\textsf{DynFO}}}}$. As such, an
explicit order on the universe in the structures of the vocabulary
is not needed when working in $\textsf{\textsf{\textsf{\textsf{DynFO}}}}$. 

We will need to maintain the running sums of shortest paths in the
graph for which we will need the basic arithmetic operations of addition
and subtraction. Since transitive closure is not in $\textsf{FOL}$
(see Theorem \ref{thm-Transitive-Closure-not-in-fol}), we cannot
add a set of arbitrary numbers in $\textsf{FOL}$. But we can add
a set of $\emph{k}$ numbers in $\textsf{FOL}$, where $\emph{k}$
is a constant.

The crucial thing to note is that arithmetic is only as good as the
ordering. By this, we mean that any query for addition or subtraction
can be converted to an equivalent query for ordering. For example,
querying $4+7$, for addition, is equivalent to querying: the $7$th
element in the ordering from the ($4$th element in the ordering from
the (\emph{least ordered element})). 

Also, the build-up of ordering and summations needs to be done only
during \emph{insertion} queries. Moreover, the relations developed
in this section hold for \emph{both} the states $\emph{A}$ and $\emph{B}$
as described in the part IV of Preliminaries (\ref{preliminaries-part-IV}).

The ideas developed here are similar to the ones in \cite{etessami1998dynamic},
specifically that the Ordering can be maintained in $\textsf{\textsf{\textsf{\textsf{DynFO}}}}$.
We essentially show the same thing, except extending the fact that
these relations hold even for arbitrarily large graphs.

\subsection{Maintaining the Universe in $\textsf{\textsf{\textsf{\textsf{DynFO}}}}$}

One subtle fact that needs to be considered are the universes on which
the operators $\forall$ and $\exists$ are going to act. Either we
can explicitly maintain the universes on which the operators will
act, or we can choose the universes to be the same as the ones for
the input. Here we shall explicitly maintain the universe, using a
unary relation $U(x)$ which holds if $x$ belongs to the universe.
Only previously unknown elements will be allowed to enter the universe.
Note that since we are maintaining our own universe, the universe
from which the symbols are picked need not have finite size. The universe
can grow arbitrarily large, and we will maintain the necessary ordering
and summations for it (shown in further sections). 

The queries in a high-level form to maintain the Universe, or $U(x)$,
during $insert(a,b)$ are as follows (we give these relations to give
an idea of the manner in which queries will be expressed throughout):
\\

$U'(x)=U(x)\vee(\neg U(a)\wedge x=a)$

$U''(x)=U(x)\vee(\neg U(b)\wedge x=b)$

\subsection{Ordering in $\textsf{\textsf{\textsf{\textsf{DynFO}}}}$}

To maintain Ordering in $\textsf{\textsf{\textsf{\textsf{DynFO}}}}$,
we shall maintain the relation $O(x,y)$ which will be the transitive
closure on the ordering relation, implying $x\leq y$. 

The (total) order will be decided on the basis of the \emph{first}
time a specific element in the universe is used as some part of an
``insertion'' query. This means that the first time some tuple (edge
in our case) is added to the graph which contains the specific element,
that element will enter the ordering relation. 

For instance, when the graph is empty, if the first query is to insert
an edge between the vertices numbered $(9,7)$, i.e. $insert(9,7)$,
we add 9 < 7 to the ordering, meaning we insert $(9,7)$ in $O$.
If the second query is $insert(4,7)$, first we add $4$ to the ordering
relation; hence, we add $(9,4)$ and $(7,4)$ to $O$. Since $7$
(the second element of the query) is already present in the ordering
relation, we do nothing. If the next query is $insert(4,9)$, we still
do nothing since both $4$ and $9$ are in the ordering relation.
If the query after that is $insert(4,8)$, since $4$ is already in
the ordering relation, we do nothing; but 8 is not in the relation.
Hence, we add the following tuples to O: $(9,8)$, $(7,8)$, and $(4,8)$.
Also, for each new element, say $8$ in this case, we add $(8,8)$
to the ordering relation to satisfy the equality too. At this juncture,
the ordering that we have is: $9<7<4<8$. Note that every number here
is treated as a \emph{symbol} and the symantic value of the number
is ignored. 

The queries in a high-level form to maintain the ordering, or $O(x,y)$,
during $insert(a,b)$ are as follows: \\

$O'(x,y)=O(x,y)\vee(\neg U(a)\wedge y=a)$ 

$O''(x,y)=O'(x,y)\vee(\neg U'(b)\wedge y=b)$

\subsection{Summation in $\textsf{\textsf{\textsf{DynFO}}}$}

We shall use the ordering relation to maintain the summations in $\textsf{\textsf{\textsf{\textsf{\textsf{DynFO}}}}}$.
But to start with the summations, we need a minimum element (or the
identity for summation), the $0$th unique element, say $min$, for
which the following will hold: $min+min=min$. We shall choose $min$
as the \emph{first} element that enters the ordering. The relation
that we will maintain will be $Sum(t,x,y)$, which would hold for
any $t,x$ and $y$ if $t=x+y$. We shall now show that this relation
can be maintained in $\textsf{\textsf{\textsf{\textsf{\textsf{DynFO}}}}}$. 

Assume that the ordering due to some sequence of insertions and deletions
is as follows: $9<7<4<8<3<5<1<2<6$. Our summations, will satisfy
the following invariant: \emph{the $i$'th element in the ordering
+ the $j$'th element in the ordering = $(i+j)$'th element in the
ordering}. Thus 9 will be the 0'th element in the above example. Hence
$7+4=8$, $9+7=7$, $8+5=6$ etc. Also, note that $1+2$ would be
the $(6+7)$'th element, or the $13$'th element; But the ordering
does not still have the $13$'th element, and as such, this summation
does not exist. Hence, by maintaining the summation in $\textsf{\textsf{\textsf{\textsf{\textsf{DynFO}}}}}$,
we will need that our summations remain below the largest ordered
element in the ordering relation. We will see that this will suffice
for the problem that we have at hand. Moreover, we assume from here
onwards that whenever we use some number, like 0 or 1 or any other
explicitly in this paper, we mean the 0'th or the 1'th or corresponding
element in the ordering. 

Once we have the summation relation, the differences between numbers
can also be easily maintained. By this, we mean that a query for $t=x-y$
is equivalent to $Sum(x,t,y)$. Hence, we shall not explicitly maintain
the difference relation.

The main idea in updating summation is that if we have the list of
all the 2-tuples which sum to the maximum element, then this list
can be used to create, in $\textsf{FOL}$, the list of all the 2-tuples
which sum to one more than the maximum element. 

We insert the elements $a$ and $b$ if they are not present in the
universe, and sequentially make the change. 

\label{to-summation-dynfo}The queries in a high-level form to maintain
the summation, or $Sum(t,x,y)$, during $insert(a,b)$, when for the
case of $a$ being inserted in the universe are given in \ref{alg-summations-dynfo}
\\

From now onwards, we shall freely use the notations $a\leq b$ and
$a+b$ whenever necessary, knowing that they can be easily replaced
by $O(a,b)$ and $\forall t,\, Sum(t,a,b)$.

We can conclude from this section that:
\begin{thm}
Ordering and Arithmetic can be maintained in $\textsf{\textsf{\textsf{\textsf{\textsf{DynFO}}}}}$. 
\end{thm}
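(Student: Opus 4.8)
The plan is to assemble the theorem from the three constructions already sketched in Sections 3.1--3.3, verifying that each relation is correctly maintained and that the update formulas genuinely lie in $\textsf{FOL}$. Concretely, I would proceed in the order: universe $U$, then ordering $O$, then summation $Sum$, since each later relation is defined in terms of the earlier ones, and the invariants stack on top of one another.

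First I would handle $U(x)$. The update formulas $U'(x)=U(x)\vee(\neg U(a)\wedge x=a)$ and $U''(x)=U(x)\vee(\neg U'(b)\wedge x=b)$ are plainly quantifier-free (hence in $\textsf{FOL}$), and one checks by a trivial induction on the length of the request stream that $U$ holds exactly of those symbols that have appeared in some $ins$ request so far; deletions do not shrink $U$, which is fine since we only need $U$ to track ``symbols ever seen.'' Next, for $O(x,y)$: I would show by induction that after any prefix of insertions, $O$ is a total order on $U$ in which elements are ranked by first appearance. The key point is that when a genuinely new element $a$ enters, the formula $O'(x,y)=O(x,y)\vee(\neg U(a)\wedge y=a)$ places $a$ above everything currently present and adds the reflexive pair $(a,a)$ (this is where the paper's remark about adding $(8,8)$ comes in; the high-level query elides it but it is absorbed by taking $x$ to range over $U'$ including $a$ itself), and that this preserves transitivity and totality. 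Again the formula is quantifier-free. The one subtlety to note explicitly is that a new element always goes to the \emph{top} of the current order, so no reordering of existing pairs is ever needed --- this is exactly why $\textsf{FOL}$ suffices and no transitive-closure-like computation is triggered.

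The substantive step is $Sum(t,x,y)$, and this is where I expect the main obstacle to lie. The invariant to maintain is: $Sum(t,x,y)$ holds iff $x$ is the $i$-th, $y$ the $j$-th, and $t$ the $(i+j)$-th element of the current order, \emph{and} the $(i+j)$-th element exists (i.e.\ $i+j \le$ the index of the current maximum). When a new element $a$ is appended at the top, the new maximum index increases by one, so exactly the pairs summing to the new top need to be added; the algorithm referenced as \ref{alg-summations-dynfo} does this by reading off, from the old $Sum$ relation restricted to pairs summing to the old maximum, the pairs that now sum to the new maximum --- using $O$ to ``shift by one.'' I would verify that this recomputation is expressible with only first-order quantification over $U$ (using $O$ to locate successors), which it is precisely because we only ever extend the top slice by one index at a time, so a single bounded pass suffices rather than an unbounded iteration. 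I would then note the derived facts for free: subtraction via $t=x-y \iff Sum(x,t,y)$, and the designation of $min$ as the first element entered, for which $Sum(min,min,min)$ holds vacuously under the indexing convention. Finally I would remark that, as stated in the section, all of this is built up only on insertion requests and holds in both states $A$ and $B$, and that the summations are correctly partial (undefined above the current maximum), which the later sections tolerate. Combining the three verified constructions, each with quantifier-free or first-order update formulas and polynomially bounded relations, yields the theorem.
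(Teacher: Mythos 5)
Your proposal is correct and follows essentially the same route as the paper: the theorem is established by the three constructions in Sections 3.1--3.3 (universe, ordering by first appearance, and summation via the ``shift by one from the current maximum'' trick), each with first-order update formulas. Your added verification details (the new element always appended at the top so no reordering is needed, the explicit $Sum$ invariant tied to positions in $O$, the partiality of $Sum$ above the current maximum, and subtraction read off as $Sum(x,t,y)$) are all consistent with and implicit in the paper's exposition.
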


\section{Breadth-First-Search in $\textsf{\textsf{\textsf{\textsf{DynFO}}}}$}

In this section, we shall show that Breadth-First-Search (abbreviated
\textsc{BFS}) for any arbitrary undirected graph lies in $\textsf{\textsf{\textsf{\textsf{DynFO}}}}$.
More specifically, we shall show that there exists a set of relations,
such that using those relations, finding the minimum distance between
any two points in a graph can be done through \textsf{FOL}, and the
set of all the points at a particular distance from a given point
can be retrieved through a \textsf{FO} query, in any arbitrary undirected
graph. Also, the modification of the relations can be carried out
using \textsf{FOL}, during insertion or deletion of edges.

The definitions and terminologies regarding \textsc{BFS} can be found
in any standard textbook on algorithms, like \cite{Cormen:2001:IA:580470}.

The main idea is to maintain the \textsc{BFS} tree from \emph{each}
vertex in the graph. This idea is important, because it will be extended
in the next section. To achieve this, we shall maintain the following
relations:
\begin{itemize}
\item $Level(v,x,l)$, implying that the vertex $x$ is at level $l$ in
the \textsc{BFS} tree of vertex $v$ (A vertex $x$ is said to be
at level $l$ in the \textsc{BFS} tree of $v$ if the distance between
$x$ and $v$ is $l$);
\item $BFSEdge(v,x,y)$, meaning that the edge $(x,y)$ of the graph is
in the \textsc{BFS} tree rooted at $v$;
\item $Path(v,x,y,z)$ meaning that vertex $z$ is on the path from $x$
to $y$, in \textsc{BFS} tree of $v$. Also
\item $Edge(x,y)$ will denote all the edges present in the entire graph.
\end{itemize}
Note that it is sufficient to maintain the $Level$ relation to query
the length of the shortest path between any two vertices. We maintain
the $BFSEdge$ and $Path$ relations only if we want the actual shortest
path between any two vertices.

These relations form the vocabulary $\tau$ as in Definition \ref{def-dynamiccomplexityclass}.

\subsection{Maintaining $Edge(x,y)$}

Maintaning the edges in the graph is trivial.

During insertion: $Edge'(x,y)=Edge(x,y)\vee(x=a\wedge y=b)\vee(x=b\wedge y=a)$

During deletion: $Edge'(x,y)=Edge(x,y)\wedge(x\not=a\wedge y\not=b)\wedge(x\not=b\wedge y\not=a)$

\subsection{Maintaining $Level(v,x,l)$, $BFSEdge(v,x,y)$, $Path(v,x,y,z)$}

We shall first focus on the $Level(v,x,l)$ relation, since it will
give us the tools required for the other two relations.

In maintaining this relation, we are effectively maintaining the shortest
distances between every pair of vertices. We will need to understand
how the various \textsc{BFS} trees behave during insertion and deletion
of edges before we write down the queries.

We will use the following notations from this section onwards. Let
$path_{v}(\alpha,\beta)$ denote the set of edges in the path from
vertex $\alpha$ to $\beta$, in the \textsc{BFS} tree of $v$. Let
$|path_{v}(\alpha,\beta)|$ denote its size. Hence $Level(v,x,l)$
means $|path_{v}(v,x)|=l$. Let $level_{v}(x)$ denote the level of
vertex $x$ in \textsc{BFS} tree of $v$. Hence, $Level(v,x,l)\Leftrightarrow level_{v}(x)=l$.
Also, we shall succinctly denote the edge from $a$ to $b$ by $\{a,b\}$.
The vertices which are not connected to $v$ will not appear in any
tuple in the BFS-tree of $v$.

Note that any path can be split into two disjoint paths. For instance,
$path_{v}(a,b)=path_{v}(a,d)\cup path_{v}(d,b)$ for any vertex $d$
on $path_{v}(a,b)$, simply because there is only one path in a tree
between any two vertices.

\subsubsection{$insert(a,b)$}

Due to the insertion of edge $\{a,b\}$, various paths in many BFS
trees will change. We will show that many of the paths do not change,
and these can be used to update the shortest paths that do change. 

We shall see how to modify level of some vertex $x$ in the \textsc{BFS}
tree of some vertex $v$. But before we proceed, we'll need the following
important lemma:
\begin{lem}
\label{lem-level-change-during-insertion}After the insertion of an
edge $\{a,b\}$, the level of a vertex $x$ cannot change both in
the \textsc{BFS} trees of $a$ and $b$.\end{lem}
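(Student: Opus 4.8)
The plan is to argue by contradiction: suppose that inserting the edge $\{a,b\}$ strictly decreases $\mathrm{level}_a(x)$ and also strictly decreases $\mathrm{level}_b(x)$. Write $\mathrm{lev}$ for distances in the old graph $G$ and $\mathrm{lev}'$ for distances in $G' = G + \{a,b\}$. First I would record the elementary observation that the only new shortest paths in $G'$ are ones that actually use the edge $\{a,b\}$; hence if $\mathrm{lev}'(a,x) < \mathrm{lev}(a,x)$, some geodesic from $a$ to $x$ in $G'$ traverses $\{a,b\}$, and since it starts at $a$ it must do so on its very first step, giving $\mathrm{lev}'(a,x) = 1 + \mathrm{lev}'(b,x) = 1 + \mathrm{lev}(b,x)$ (the last equality because a geodesic from $b$ to $x$ that avoided using $\{a,b\}$ — which this suffix does, as a geodesic cannot repeat the vertex $a$ — has the same length in $G$ and $G'$). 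Symmetrically, $\mathrm{lev}'(b,x) < \mathrm{lev}(b,x)$ forces $\mathrm{lev}'(b,x) = 1 + \mathrm{lev}(a,x)$.

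Combining the two equalities I would get $\mathrm{lev}'(a,x) = 1 + \mathrm{lev}(b,x)$ and, substituting, $\mathrm{lev}(b,x) = \mathrm{lev}'(b,x) = 1 + \mathrm{lev}(a,x)$, so $\mathrm{lev}'(a,x) = 2 + \mathrm{lev}(a,x) > \mathrm{lev}(a,x)$. This contradicts the assumption that the level of $x$ in the BFS tree of $a$ strictly decreased (indeed it contradicts it being even unchanged). Hence at most one of the two levels can change.

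The only subtlety — and the step I would be most careful about — is the claim that a shortest $b$–$x$ path in $G'$ which arises as the tail of an $a$–$x$ geodesic does not itself use the edge $\{a,b\}$, so that its length is the genuine old distance $\mathrm{lev}(b,x)$ rather than something smaller. This is immediate because a geodesic is a simple path: the full $a$–$x$ geodesic uses $a$ only at its initial endpoint, so its suffix from $b$ onward never returns to $a$ and in particular never re-uses $\{a,b\}$. I would also note the degenerate cases (if $x$ is disconnected from $a$ or $b$ in $G$, or $x \in \{a,b\}$) are handled trivially, since then one of the two "levels" is undefined or equal to $0$ and cannot decrease. Everything else is just the triangle-inequality bookkeeping above, so I would keep that part terse.
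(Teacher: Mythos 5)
Your proof is correct and takes essentially the same approach as the paper: the key observation in both is that any strictly shorter $a$--$x$ path in $G'$ must start with the edge $\{a,b\}$, and its suffix from $b$ to $x$ (being a simple path avoiding $a$) already exists in $G$, so the $b$--$x$ distance is unchanged. The paper packages this with a WLOG ($|p_a|\leq|p_b|$) to show the smaller-distance side directly cannot change, whereas you run the symmetric double-decrease to a contradiction; this is slightly more circuitous than necessary (your first implication $\mathrm{lev}'(a,x)<\mathrm{lev}(a,x)\Rightarrow\mathrm{lev}'(b,x)=\mathrm{lev}(b,x)$ already finishes the lemma), but the reasoning is sound.
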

\begin{proof}
\begin{figure}[H]
\begin{centering}
\includegraphics[scale=0.5]{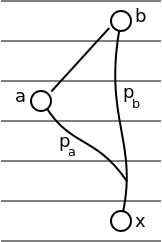}
\par\end{centering}

\caption{\label{fig1:Path-invariance-during-insertion}Path invariance during
insertion of an edge $\{a,b\}$}

\end{figure}
(Refer to Figure \ref{fig1:Path-invariance-during-insertion}) Before
the insertion of $\{a,b\}$, let $p_{a}=path_{a}(a,x)$, and $p_{b}=path_{b}(b,x)$.
Without loss of generality, let $|p_{a}|\leq|p_{b}|$. Now we can
show that $b$ does not lie on $p_{a}$. If $b$ does lie on $p_{a}$,
then $p_{a}=path_{a}(a,x)=path_{a}(a,b)\cup path_{a}(b,x)$, and $|p_{a}|=|path_{a}(a,b)|+|path_{a}(b,x)|$.
Since $a$ and $b$ are distinct vertices, $|path_{a}(a,b)|>0$, and
since $|path_{b}(b,x)|$ is the shortest path between the vertices
$b$ and $x$, $|path_{a}(b,x)|\geq|path_{b}(b,x)|$. Hence, $|p_{a}|=|path_{a}(a,b)|+|path_{a}(b,x)|>|path_{a}(b,x)|\geq|path_{b}(b,x)|=|p_{b}|$
which is a contradiction. Hence, $b$ cannot lie on $p_{a}$.

Since $|p_{a}|\leq|p_{b}|$ and $b$ does not lie on $p_{a}$, the
insertion of edge $\{a,b\}$ does not change the shortest distance
between $a$ and $x$. If the shortest distance changes, then the
new path will be $\{a,b\}\cup path_{b}(b,x)$, and $1+|p_{b}|<|p_{a}|$,
which would be a contradiction. A similar thing happens if $|p_{b}|\leq|p_{a}|$.
Hence, for every $x$, either $p_{a}$ or $p_{b}$ remains unchanged.
\end{proof}
Since the level of vertex $x$ remains invariant in atleast one \textsc{BFS}
tree, this fact can be used to modify the level of (and subsequently
even the paths to) $x$ using this invariant. This fact will be crucial
in the queries that we write next.

To update the $BFSEdge$ and $Path$ relations, since we will create
the new shortest path by joining together two different paths, we
need to ensure that these paths are disjoint. 

Without loss of generality, let $|path_{b}(b,x)|\leq|path_{a}(a,x)|$
\begin{lem}
\label{lem2}If any vertex $t$ is on $path_{b}(b,x)$ and on $path_{v}(v,a)$,
then the shortest path from $v$ to $x$ does not change after insertion
of the edge $\{a,b\}$\end{lem}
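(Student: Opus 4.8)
The plan is to show that if some vertex $t$ lies simultaneously on $path_b(b,x)$ and on $path_v(v,a)$, then the shortest distance from $v$ to $x$ is unaffected by inserting $\{a,b\}$. The only way this distance could decrease is if some new shortest path from $v$ to $x$ uses the fresh edge $\{a,b\}$; by Lemma~\ref{lem-level-change-during-insertion} and the reduction to the case $|path_b(b,x)| \le |path_a(a,x)|$, such a path would traverse $\{a,b\}$ in the direction $v \rightsquigarrow a \to b \rightsquigarrow x$, so its length would be $|path_v(v,a)| + 1 + |path_b(b,x)|$. I would therefore derive a contradiction by exhibiting an \emph{already-existing} $v$–$x$ walk of length strictly less than this, using $t$ as a pivot.

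The key computation runs as follows. Split along $t$: since $t$ is on $path_b(b,x)$, we have $|path_b(b,x)| = |path_b(b,t)| + |path_b(t,x)|$, and since $t$ is on $path_v(v,a)$, we have $|path_v(v,a)| = |path_v(v,t)| + |path_v(t,a)|$. Now concatenate $path_v(v,t)$ with $path_b(t,x)$: this is a pre-existing $v$–$x$ walk of length $|path_v(v,t)| + |path_b(t,x)|$. Comparing with the putative new path length $|path_v(v,t)| + |path_v(t,a)| + 1 + |path_b(b,t)| + |path_b(t,x)|$, the difference is $|path_v(t,a)| + 1 + |path_b(b,t)| \ge 1 > 0$ (in fact $\ge 2$ since $a \ne b$ forces $|path_b(b,t)| \ge 0$ and the $+1$ is unconditional). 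Hence the old distance from $v$ to $x$ was already at most $|path_v(v,t)| + |path_b(t,x)| <$ (new path length), so routing through $\{a,b\}$ cannot produce anything shorter, and $level_v(x)$ is unchanged. I would also note the boundary/degenerate cases — $t = v$, $t = x$, $t = a$, or $t = b$ — are all covered since the path-splitting identities remain valid with one summand equal to $0$.

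The main obstacle I anticipate is not the inequality itself but making sure the \emph{walk} $path_v(v,t) \cup path_b(t,x)$ is legitimately shorter than or equal to the true $v$–$x$ distance \emph{before} insertion — i.e. that I am comparing against $d_{\text{old}}(v,x)$ correctly and that this walk exists in the old graph (it does, since neither piece uses $\{a,b\}$). One must be a little careful that $path_b(t,x)$ is a genuine path in the old BFS tree of $b$ and hence a walk of that length in $G$; that is immediate from $t$ lying on $path_b(b,x)$. The other subtlety is the reduction ``without loss of generality $|path_b(b,x)| \le |path_a(a,x)|$'': this is exactly the symmetry already invoked before the lemma statement, so by the convention of the surrounding text (the symmetric case is handled by swapping $a \leftrightarrow b$, i.e.\ applying $Eq(\cdot)$) it suffices to treat this one orientation. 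With those points pinned down, the lemma follows directly from the displayed length comparison.
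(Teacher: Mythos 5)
Your argument is correct and is essentially the paper's own proof: both pivot at $t$ to form the pre-existing walk $path_v(v,t)\cup path_b(t,x)$ (the paper's $P_3$), observe it is a sub-walk of the candidate new path through $\{a,b\}$, and note it already witnesses a $v$--$x$ distance no longer than that candidate. You make the length bookkeeping explicit (and even extract the strict inequality), whereas the paper states it via the containment $P_3\subseteq P_2$; this is a presentational difference only.
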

\begin{proof}
Consider Figure \ref{fig2:Disjointness-of-2paths-during-insertion}.
\begin{figure}[H]
\begin{centering}
\includegraphics[scale=0.5]{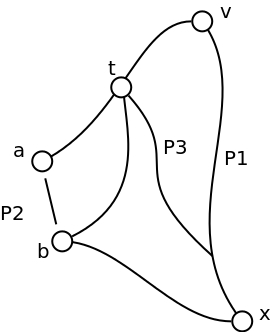}
\par\end{centering}

\caption{\label{fig2:Disjointness-of-2paths-during-insertion}Disjointness
of $path_{v}(v,a)$ and $path_{b}(b,x)$}

\end{figure}

Let $P_{1}=path_{v}(v,x)$. Let $P_{2}$ be the walk formed by the
concatenation of $path_{v}(v,a)$, the edge $\{a,b\}$, and $path_{b}(b,x)$
(the walk is a collection of edges, possibly repeated). Let $P_{3}$
be the walk formed by the concatenation of $path_{v}(v,t)$ and $path_{b}(t,x)$.
Since the walk $P_{3}$ was present even before the insertion of edge
$\{a,b\}$, $|P_{3}|\geq|P_{1}|$. Since $P_{3}$ is a subset of $P_{2}$,
$|P_{2}|\geq|P_{3}|$. Hence, $|P_{2}|\geq|P_{1}|$, as required.
\end{proof}
\label{emp-bfs-insert}The queries during insertion of edge $\{a,b\}$
are as given and illustrated in \ref{alg-bfs-insert}, and illustrated
in \ref{figA:-bfs-insert}. The correctness of the queries follows
from Lemma \ref{lem-level-change-during-insertion} and Lemma \ref{lem2}.

\subsubsection{$delete(a,b)$}

Consider now the deletion of some edge $\{a,b\}$ from the graph.
If it is present in the BFS tree of some vertex $v$, the removal
of the edge splits the tree into two different trees. Let $R_{1}=\{u\ |\ v,u\mbox{ are connected in }V_{G}\backslash\{a,b\}\}$,
and $R_{2}=\{u\ |\ u\notin R_{1}\}$. We find the set $PR=\{(p,r)\ |\ p\in R_{1}\wedge r\in R_{2}\wedge Edge(p,r)\}$,
where $PR$ is the set of edges in the graph that connect the trees
$R_{1}$ and $R_{2}$. The new path to $x$ will be a path from $v$
to $p$ in the BFS-tree of $v$, edge $\{p,r\}$, and path from $r$
to $x$ in the BFS-tree of $r$; and $\{p,r\}$ will be chosen to
yield the shortest such path, and we will choose $\{p,r\}$ to be
the lexicographically smallest amongst all such edges that yield the
shortest path. 

The only thing we need to address is the fact that the path from $r$
to $x$ in the BFS tree of $r$ does not pass through the edge $\{a,b\}$. 
\begin{lem}
When an edge $\{a,b\}$ separates a set of vertices $R_{2}$ from
the \textsc{BFS} tree of $v$, and $r$ and $x$ are vertices belonging
to $R_{2}$, then $path_{r}(r,x)$ cannot pass through edge $\{a,b\}$\end{lem}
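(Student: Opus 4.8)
The plan is to argue by contradiction, exploiting the crucial fact that both $r$ and $x$ lie in $R_2$ — the component that becomes disconnected from $v$ once $\{a,b\}$ is removed. The key observation is that the edge $\{a,b\}$, together with the two BFS-trees it separates, partitions $V_G$ (restricted to the component of $v$ in the original graph) into $R_1 \ni v$ and $R_2$, and that $\{a,b\}$ is the \emph{only} edge of the original BFS-tree of $v$ joining these two parts. First I would set up notation: suppose for contradiction that $path_r(r,x)$ does pass through the edge $\{a,b\}$. Without loss of generality say $path_r(r,x)$ reaches $a$ before $b$ as one traverses from $r$; then $path_r(r,x) = path_r(r,a) \cup \{a,b\} \cup path_r(b,x)$, and in particular $|path_r(r,x)| = |path_r(r,a)| + 1 + |path_r(b,x)|$.

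Next I would exhibit a strictly shorter walk from $r$ to $x$ that avoids $\{a,b\}$, contradicting the minimality of $path_r(r,x)$ as a shortest path. The idea is that exactly one of $a,b$ lies in $R_2$ — say it is the endpoint on the $x$-side of the split — so the sub-path of $path_r(r,x)$ between $r$ and that endpoint already stays within $R_2$, while the portion that dips out through $\{a,b\}$ into $R_1$ and back is wasteful. Concretely, since $r,x \in R_2$ and $R_2$ is connected in $V_G \setminus \{a,b\}$ (it is the vertex set of one of the two trees obtained by deleting $\{a,b\}$ from the connected BFS-tree of $v$, hence itself a tree and in particular connected via edges not equal to $\{a,b\}$), there is \emph{some} path from $r$ to $x$ avoiding $\{a,b\}$; but more is true — the shortest such path is no longer than $path_r(r,x)$ only if $path_r(r,x)$ never uses $\{a,b\}$ in the first place. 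I would make this precise by observing that if $path_r(r,x)$ used $\{a,b\}$, then (since $\{a,b\}$ is a bridge between $R_1$ and $R_2$) the path would have to cross from $R_2$ into $R_1$ and then back into $R_2$, necessarily using $\{a,b\}$ at least twice — impossible for a simple path in a tree — unless one of $r,x$ is in $R_1$, which contradicts $r,x \in R_2$.

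The cleanest formulation: $R_2$ induces a subtree of the BFS-tree of $r$? No — $r$'s BFS-tree need not respect the split. So instead I would argue at the level of the underlying graph $V_G \setminus \{a,b\}$. Since $r$ and $x$ are in the same connected component $R_2$ of $V_G \setminus \{a,b\}$, their graph-distance in $V_G \setminus \{a,b\}$ is finite, call it $d$. Any path realizing $path_r(r,x)$ in the full graph $V_G$ has length exactly $\mathrm{dist}_{V_G}(r,x) \le d$. If this geodesic used the edge $\{a,b\}$, it would at some point leave $R_2$, entering $R_1$, and must re-enter $R_2$ to reach $x \in R_2$; but every $R_1$–$R_2$ crossing uses the edge $\{a,b\}$ (that is precisely what "separates" means), so a simple path can cross at most once — contradiction. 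Hence $path_r(r,x)$ avoids $\{a,b\}$, as claimed.

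\textbf{Main obstacle.} The only delicate point is justifying that $\{a,b\}$ is the \emph{unique} edge of $V_G$ between $R_1$ and $R_2$ — equivalently, that $R_1$ and $R_2$ are exactly the two connected components of $V_G \setminus \{a,b\}$ restricted to $v$'s component, so that every crossing between them in $V_G$ must traverse $\{a,b\}$. This follows from the definitions of $R_1$ and $R_2$ given just before the lemma ($R_1$ is the set of vertices still connected to $v$ after removing $\{a,b\}$, and $R_2$ is its complement within that component), but one must note that $R_2$ being nonempty means $\{a,b\}$ genuinely disconnects $v$'s component, i.e. $\{a,b\}$ is a bridge of that component — which is guaranteed here because we are in the case where $\{a,b\}$ is an edge of the BFS-tree of $v$, and removing a tree-edge from a spanning tree of a connected graph disconnects the tree but the bridge property in $G$ itself is what we need. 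Actually the bridge property in $G$ is not automatic from being a BFS-tree edge; but it \emph{is} automatic from the stipulation that deleting $\{a,b\}$ leaves $R_2 \neq \emptyset$ separated — so I would simply take "separates $R_2$ from the BFS-tree of $v$" in the lemma hypothesis to mean exactly that $\{a,b\}$ is a bridge whose removal isolates $R_2$, and the rest is the short parity-of-crossings argument above.
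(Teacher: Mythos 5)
You have identified the right delicate point but resolved it in the wrong direction. The set $R_2$ is not a connected component of $G$ minus the edge $\{a,b\}$ in the graph-theoretic sense; it is the piece of the BFS \emph{tree} of $v$ obtained by deleting the tree-edge $\{a,b\}$, as the appendix query $R_{2}(v,x)=BFSEdge(v,a,b)\wedge Path(v,v,x,\{a,b\})$ makes explicit: $x\in R_2$ exactly when $a$ and $b$ both lie on the tree-path from $v$ to $x$, i.e.\ when $x$ descends from $b$ in the BFS tree of $v$. Consequently $\{a,b\}$ is in general \emph{not} a bridge of $G$: there are typically many non-tree edges running between $R_1$ and $R_2$, and the whole point of the deletion algorithm is that the set $PR$ of such cross-edges is what gets used to re-route shortest paths after $\{a,b\}$ is removed. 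If $\{a,b\}$ were literally a bridge, as you assume, then $PR$ would be empty after the deletion and the update would have nothing to work with. Your parity-of-crossings argument therefore collapses: a shortest $r$--$x$ path can leave $R_2$, enter $R_1$ across one non-tree cross-edge, and re-enter $R_2$ across another, so ``a simple path can cross at most once'' does not follow.

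What you actually need, and what the paper's proof uses, is a distance argument exploiting the BFS-tree structure. Since $r$ descends from $b$ in the BFS tree of $v$, the tree-path from $v$ to $r$ is a shortest $v$--$r$ path in $G$ and it passes through $a$ and then $b$; its $a$-to-$r$ segment is therefore itself a shortest path, giving $\mathrm{dist}(a,r)=1+\mathrm{dist}(b,r)$, and likewise $\mathrm{dist}(a,x)=1+\mathrm{dist}(b,x)$. If the shortest $r$--$x$ path traversed the edge $\{a,b\}$, one would get $\mathrm{dist}(r,x)=2+\mathrm{dist}(b,r)+\mathrm{dist}(b,x)\geq 2+\mathrm{dist}(r,x)$, a contradiction. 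This is a quantitative ``the detour through $a$ costs two'' argument, not a topological separation argument, and it is what the paper writes (albeit with the two distance identities asserted rather tersely).
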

\begin{proof}
Refer to Figure \ref{fig3:Deletion-of-edge-bfs}. 
\begin{figure}[H]
\begin{centering}
\includegraphics[scale=0.5]{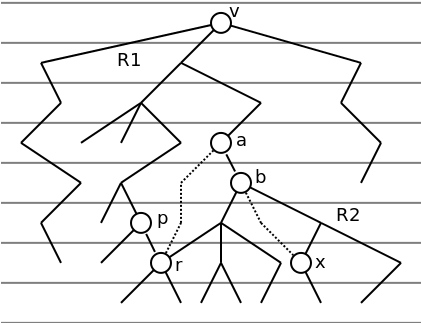}
\par\end{centering}

\caption{\label{fig3:Deletion-of-edge-bfs}Deletion of edge $\{a,b\}$ and
the subsequent changes (the lines represent levels in the BFS tree
of vertex $v$)}
\end{figure}
 We prove by contradiction. Let the shortest path from $r$ to $x$
pass through the edge $\{a,b\}$. For any two vertices $s$ and $t$,
$|path_{s}(s,t)|=|path_{t}(t,s)|$, though $path_{s}(s,t)$ may not
be equal to $path_{t}(t,s)$, since there can be only one \emph{value}
of shortest path in an \emph{undirected} graph, though any number
of such paths. Hence $|path_{r}(r,a)|=|path_{a}(a,r)|=1+|path_{b}(b,r)|=1+|path_{r}(r,b)|$.
Also, $|path_{a}(a,x)|=1+|path_{b}(b,x)|$. If $path_{r}(r,x)$ passes
through$\{a,b\}$, $|path_{r}(r,x)|=|path_{r}(r,a)|+|path_{a}(a,x)|=2+|path_{r}(r,b)|+|path_{b}(b,x)|$.
But $path_{r}(r,b)\cup path_{b}(b,x)$ is a path from $r$ to $x$,
and is shorter by atleast $2$ units from our claimed shortest path
from $r$ to $x$. Hence, $path_{r}(r,x)$ cannot pass through $\{a,b\}$. \end{proof}
\begin{rem}
An important observation is that \emph{the above lemma holds only
for the ``undirected'' case. It fails for the directed case, implying
that the same relations cannot be used for BFS in directed graphs}.
To see a simple counter-example, note that there can be a directed
edge from $r$ to $a$ in the directed case, and in that case, the
shortest path from $r$ to $x$ can pass through $(a,b)$.
\end{rem}
Also note that for every vertex $x$ in $R_{1}$, the shortest path
from $v$ to $x$ remains the same, since removal of an edge cannot
\emph{decrease} the shortest distance. 

\label{emp-bfs-delete}The queries during deletion of edge $\{a,b\}$
are given in \ref{alg-bfs-delete}. (Refer to Figure \ref{fig3:Deletion-of-edge-bfs}
for illustration on selection of edge $\{p,r\}$)
\begin{rem}
Note that although we pick the new paths for every vertex in the set
$R_{2}$ in parallel, we need to ensure that the paths picked are
consistent, i.e. the paths form a tree and no cycle is formed. This
is straightforward to see, since if a cycle is formed, it is possible
to pick another path for some vertex that came earlier in the lexicographic
ordering. Hence, our queries are consistent.
\end{rem}
This leads us to the following theorem:
\begin{thm}
Breadth-First-Search for an undirected graph is in $\textsf{\textsf{\textsf{\textsf{DynFO}}}}$.
\\

\end{thm}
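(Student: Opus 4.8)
The plan is to combine the three structural lemmas just proved (Lemma~\ref{lem-level-change-during-insertion}, Lemma~\ref{lem2}, and the deletion lemma) with the arithmetic infrastructure of Section~3 to exhibit explicit $\textsf{FO}$ update formulas for the vocabulary $\tau = \langle Edge, Level, BFSEdge, Path \rangle$. By Definition~\ref{def-dynamiccomplexityclass}, it suffices to give: (i) the initial structure $f(\emptyset)$, which is trivial since the empty graph has every $Level$, $BFSEdge$, $Path$ relation empty, and this is $\textsf{FO}$-computable from $n$; (ii) for each request $ins(a,b)$ and $del(a,b)$, an $\textsf{FO}$ formula $g$ computing the new relations from the old ones together with the ordering/summation relations $O$ and $Sum$; and (iii) a check that the bounded-universe condition holds, which is immediate since all maintained relations have fixed arity and range over the polynomially-bounded universe $U$. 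Correctness of the query relative to the static class --- that $Level$ really encodes shortest-path distances and hence answers reachability/distance queries --- is a side condition that holds inductively provided each $g$ is correct, so the whole burden is the correctness of the update formulas, already underwritten by the three lemmas.

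For the insertion of $\{a,b\}$, the argument runs as follows. Lemma~\ref{lem-level-change-during-insertion} says that for each vertex $x$, at least one of $level_a(x)$, $level_b(x)$ is unchanged; WLOG (detectable in $\textsf{FO}$ by comparing $Level(a,x,\cdot)$ and $Level(b,x,\cdot)$ values using $O$) say $level_b(b,x)$ is the stable one. Then for an arbitrary root $v$, the new distance $level_v(v,x)$ is the minimum of the old $level_v(v,x)$ and $level_v(v,a) + 1 + level_b(b,x)$ --- a quantity expressible using $Sum$ with only a constant number of additions, hence in $\textsf{FO}$. The new $Level'$ relation is then $Level'(v,x,l) \equiv (l = \min(\dots))$, written out with $O$. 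For $BFSEdge'$ and $Path'$ we need the new path to be a genuine simple path: Lemma~\ref{lem2} guarantees that whenever $path_v(v,a)$ and $path_b(b,x)$ share a vertex the distance does not actually improve, so on the vertices whose level \emph{does} change the two pieces are disjoint and their concatenation (plus the edge $\{a,b\}$) is a simple path; this concatenation is $\textsf{FO}$-definable from the old $Path$ relation. The consistency remark (that the chosen parent edges form a tree) is handled by the lexicographically-least tie-breaking already described.

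For deletion of $\{a,b\}$: if $\{a,b\}$ is not a $BFSEdge$ of $v$, nothing in that tree changes. Otherwise removal splits the tree of $v$ into $R_1 \ni v$ and $R_2$; membership in $R_1$ versus $R_2$ is decided by asking whether $BFSEdge(v,\cdot,\cdot)$-connectivity to $v$ survives --- and crucially this is an $\textsf{FO}$ question because it amounts to checking whether $\{a,b\}$ lies on $path_v(v,x)$, which the $Path$ relation answers directly. For $x \in R_1$ nothing changes. For $x \in R_2$, the new path to $x$ is $path_v(v,p) \cdot \{p,r\} \cdot path_r(r,x)$ over the reconnecting edge set $PR$, with $\{p,r\}$ chosen to minimize total length and then lexicographically least; the deletion lemma is exactly what guarantees $path_r(r,x)$ does not traverse $\{a,b\}$, so the concatenation is legitimate, and minimizing over $PR$ with a constant-length $Sum$ expression is $\textsf{FO}$. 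Writing $Level'$, $BFSEdge'$, $Path'$ out from these recipes completes the update.

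The main obstacle, and the point deserving care, is ensuring every quantity that looks like a sum or a comparison genuinely stays inside $\textsf{FO}$: we are never allowed to iterate, so each new shortest-path length must be obtainable by a \emph{bounded} (constant) number of $Sum$ lookups from previously stored lengths. The lemmas are designed precisely so that each updated distance is old-distance-plus-one-hop-plus-old-distance (two additions), and the disjointness lemmas ensure the reconstructed paths are simple so that $Path'$ is a Boolean combination of old $Path$ atoms rather than something requiring a transitive closure. A secondary subtlety is that all of this must interleave correctly with the universe/ordering build-up of Section~3, but since those relations are maintained simultaneously and the BFS relations only \emph{read} $O$ and $Sum$, there is no circularity. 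The remaining work is the purely mechanical transcription into the explicit formulas deferred to the Appendix algorithms \ref{alg-bfs-insert} and \ref{alg-bfs-delete}.
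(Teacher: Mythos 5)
Your proposal follows the paper's proof essentially verbatim: you maintain the same vocabulary $\langle Edge, Level, BFSEdge, Path\rangle$, invoke the same three structural lemmas (level-invariance on one side of $\{a,b\}$ for insertion, disjointness of the concatenated pieces, and the $r$-to-$x$ path avoiding $\{a,b\}$ for deletion), derive the same update recipes (minimum of old distance and a single two-addition candidate, plus the $R_1/R_2/PR$ decomposition with lexicographic tie-breaking for deletion), and defer the literal formulas to the Appendix just as the paper does. The additional remarks about initialization, bounded universe, and the non-circularity with the $O/Sum$ relations are correct clarifications but do not change the substance of the argument.
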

\textbf{Note on the nature of deletion in }$\textsf{\textsf{\textsf{\textsf{DynFO}}}}$\textbf{:}

One thing that is not completely clear about the complexity class
$\textsf{\textsf{\textsf{\textsf{DynFO}}}}$ is that somehow maintaining
the relations in $\textsf{\textsf{\textsf{\textsf{DynFO}}}}$ during
insertion of edges (or tuples) is far easier than during deletion.
A case in point is the problem of reachability in directed graphs,
for which relations can be maintained that answer the problem query
during insertion of edges, but whether there exist relations that
can answer the query (reachability in directed graphs) during \emph{both}
insertion and deletion is an open problem. A particular reason for
this hardness arising during deletion maybe the fact that we \emph{always
begin with an empty graph}. Hence, we have all the information for
some point uptil an edge is inserted, which just adds some new information.
But when an edge is deleted, we would need all the information about
the graph that is formed from the empty graph by inserting edges upto
the point that this (deleted) edge is not inserted, but that may not
be possible using only polynomial space. To illustrate what we mean,
consider insertion of edges in the sequence - $e_{1},e_{2},e_{3},e_{4},e_{5}$.
Now if the edge $e_{3}$ is deleted, the relations do not have enough
information, i.e. they do not have the tuples that would have been
formed if the edges were instead inserted in the sequence $e_{1},e_{2},e_{4},e_{5}$
to an empty graph. We can try to go past this problem of \emph{monotonicity}
by storing all the information for all the possible sequences of insertion,
but that would require storing exponential amount of data, an we would
no longer remain in polynomial space, and thus not in $\textsf{\textsf{\textsf{\textsf{DynFO}}}}$.

Another thing that one may try is to begin with a complete graph instead
of an empty graph, and then try the deletion process. This would make
the deletion process extremely easy (it would act exactly as a dual
of insertion), but to begin with, we would need tuples required for
the \emph{entire }graph. This would itself require either a lot of
computation, and would beat the purpose of solving the problem in
$\textsf{\textsf{\textsf{\textsf{DynFO}}}}$, or exponential space,
in which case the problem would no longer be in $\textsf{\textsf{\textsf{\textsf{DynFO}}}}$.

\section{3-connected Planar Graph Isomorphism}

The ideas and the techniques hitherto developed implied for both the
states \emph{A} and \emph{B} as stated in part IV of Preliminaries
(\ref{preliminaries-part-IV}). Now onwards, our relations would no
longer hold for general graphs, and we restrict ourselves to 3-connected
and planar graphs, or state \emph{$A$} mentioned therein.

We shall now show how to maintain a canonical description of a 3-connected
planar graph in $\textsf{\textsf{\textsf{\textsf{DynFO}}}}$. To achieve
this end, we shall maintain Canonical Breadth-First Search (abbreviated
\textsc{CBFS}) trees similar to the ones used by Thierauf and Wagner
\cite{thierauf2010isomorphism}.

\subsection{Canonical Breadth-First Search Trees}

We shall define \textsc{CBFS} trees here for the sake of completeness.
A theorem of Whitney \cite{whitney1933set} says that 3-connected
planar graphs have a unique embedding on the sphere. Intuitively,
it states a topological fact that there is one and only one way to
order the edges around each vertex if a graph has to remain 3-connected
and planar. This fact is used by Thierauf and Wagner to construct
the \textsc{CBFS} trees.

Consider a 3-connected planar graph and a vertex $v$ in it. Let $N(v)=\{u:\ Edge(v,u)\}$,
that is, $N(v)$ is the set of neighbours of vertex $v$. Let $D(v)=\{0,1,2,...,d_{v}-1\}$
where $d_{v}=|N(v)|$, the degree of the vertex $v$. Consider a permutation
$\pi_{v}:N(v)\rightarrow D(v)$, such that for some $v_{1},v_{2}\in N(v)$,
if $(v,v_{2})$ is $i$'th edge encountered while moving anti-clockwise
from the edge $(v,v_{1})$, then $\pi_{v}(v_{2})=(\pi_{v}(v_{1})+i)$
mod $d_{v}$. Also, as defined in the part I of Preliminaries (\ref{preliminaries-I}),
$\pi=\{\pi_{v}\ |\ v\in V_{G}\}$, and $\pi^{c}=\{\pi_{v}^{c}\ |\ v\in V_{G}\}$.

We now define the Canonical Breath-First Search method. We need a
designated vertex and an edge to begin with. Let $v$ be the vertex,
and $v_{e}$ be another vertex such that $(v,v_{e})$ is the designated
edge. The Canonical \textsc{BFS} tree thus formed will be designated
as $[v,v_{e}]$. Informally, we sequentially add the edges starting
from $(v,v_{e})$, and then moving from $v_{e}$ towards other vertices
as per the rotation $\pi_{v}$. For every other vertex $u$, we consider
the edges according to $\pi_{u}$, beginning with the edge $(u,u_{p})$,
where $u_{p}$ is the parent edge of $u$ in the \textsc{CBFS} tree.

\label{emp-cbfs-method}Formally, the pseudocode for $[v,v_{e}]$,
according to some $\pi$ is as described in \ref{alg-cbfs-method}. 

Conider figure \ref{fig4:Canonical-Breadth-First-Search}. If we had
to perform BFS starting from vertex $u$ as per the numbering around
every vertex as shown, the sequence of vertices visited around $u$
would be $x,v,w$. We would then start with $x,$ and the sequence
would be $w,y,u$, but since $w$ and $u$ are already visited, we
just visit $y$. 

The way the numbering around every vertex is chosen given the starting
edge $(u,x)$ is as follows: We first number vertices around $u$,
and hence the vertices $x,v,w$ get the numbers $0,1,2$ (or $1,2,3$)
around $u$. Next, for every vertex in the queue, we give its parent
the number $0$, and give increasing numbers to the remaining vertices
in anti-clockwise order. Hence, the numbering around $x$ would be
$u,w,y$ with the numbers $0,1,2$. The resulting numbering is as
shown in Figure \ref{fig7:Normalized-CBFS-tree}.

\begin{figure}
\begin{centering}
\includegraphics[scale=0.5]{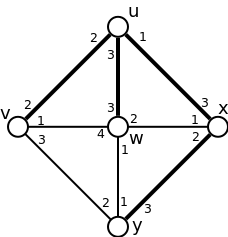}
\par\end{centering}

\caption{\label{fig4:Canonical-Breadth-First-Search}Canonical Breadth-First
Search Method for some anti-clockwise embedding, starting from $u$}

\end{figure}

We maintain a \textsc{CBFS} tree, denoted by $[v,v_{e}]$, from \emph{each}
vertex $v$ in the graph, for \emph{each} edge $(v,v_{e})$ used as
the starting embedding edge. This set of \textsc{CBFS} trees will
help us in maintaining the necessary relations, during insertions
and deletions, for isomorphism.

We need to slightly modify our previous conventions as used in the
case of \textsc{BFS} trees, since now the \textsc{CBFS} trees depend
not only on the vertices, but also on the chosen edge. Let $path_{v,v_{e}}(\alpha,\beta)$
denote the path from vertex $\alpha$ to $\beta$, in the \textsc{CBFS}
tree $[v,v_{e}]$. The other notations remain same.

Let Least Common Ancestor (LCA) of $x$ and $y$, $lca_{v,v_{e}}(x,y)$,
denote that vertex $d$ which is on $path_{v,v_{e}}(v,a)$ and $path_{v,v_{e}}(v,x)$
and whose level is \emph{maximum} amongst all such vertices. 

Denote the embedded number of vertex $x$ around vertex $u$ by $emnum_{u}(x)$,
i.e. $emnum_{u}(x)=\pi_{u}(x)$. Let $parent_{v,v_{e}}(u,u_{p})$
be true if $u_{p}$ is the parent of $u$ in $[v,v_{e}]$. Denote
by $emnum_{v,v_{e}}(u,x)=(\pi_{u}(x)-\pi_{u}(u_{p}))$ mod $d_{u}$,
the embedded number of some vertex $x$ around $u$ in $[v,v_{e}]$
if $u_{p}$ (the parent of $u$ in $[v,v_{e}]$) has been assigned
the number $0$ (or the minimum embedding number).

Since the embedding $\pi$ is unique, given a vertex and an edge incident
on it, the entire \textsc{CBFS} tree is fixed. As such, given $v,v_{e}$,
the \emph{length} of the shortest path to a vertex $x$ is fixed.
The actual path is decided by the following definition.
\begin{defn}
\label{def-can-order}Let $<_{c}$ denote a canonical ordering on
paths. Let $P_{1}=path_{v,v_{e}}(v,x_{1})$ and $P_{2}=path_{v,v_{e}}(v,x_{2})$.
Let $d=lca_{v,v_{e}}(x_{1},x_{2})$, $d_{1}$ a vertex on $P_{1}$
and $d_{2}$ a vertex on $P_{2}$, and $(d,d_{1})$ and $(d,d_{2})$
edges in the graph.

Then $P_{1}<_{c}P_{2}$ if: 
\begin{itemize}
\item $|P_{1}|<|P_{2}|$ or
\item $|P_{1}|=|P_{2}|$ and $emnum_{v,v_{e}}(d,d_{1})<emnum_{v,v_{e}}(d,d_{2})$\\

\end{itemize}
\end{defn}
We shall now see how to maintain the \textsc{CBFS} trees $[v,v_{e}]$.\\

We maintain the following relations:
\begin{itemize}
\item $Emb(v,x,n_{x})$, meaning that the vertex $x$ is in the neighbourhood
of $v$, and the edge $(v,x)$ around $v$ has the embedded number
$n_{x}$;
\item $Face(f,x,y,z)$, meaning that the vertex $z$ is in the anti-clockwise
path from vertex $x$ to vertex $y$, around the face labelled $f$.

Two things need to be said here. Since the number of faces in a planar
graph with $n$ vertices can be more than $n$, we should label the
face with a 2-tuple instead of a single symbol; but we do not do this
since it adds unnecessary technicality without adding any new insight.
If required, all the queries can be maintained for the faces labelled
as two tuples $f=(f_{1},f_{2})$.

Also, we maintain the transitive closure of edges around each face,
instead of simply the edges that constitute each face, since the splitting
and merging of faces cannot be maintained in $\textsf{\textsf{\textsf{\textsf{DynFO}}}}$
by just maintaining the set of edges. This will become clear later
in further sections that maintain the $Face$ relation;

\item $Level(v,x,l)$, meaning that the vertex $x$ is at level $l$ in
the \textsc{BFS} tree of $v$. This is exactly as in the general case.
\item $CBFSEdges(v,v_{e},s,t)$, where $(s,t)$ is an edge in the CBFS tree
$[v,v_{e}]$
\item $CPath(v,v_{e},x,y,z)$ denoting that $z$ is on the path from $x$
to $y$ in $[v,v_{e}]$
\end{itemize}

\subsection{Maintaining $Emb(v,x,n_{x})$ and $Face(f,x,y,z)$}

These two relations define the embedding of the graph in the plane.
Before proceeding, the following lemmas are required. Note that these
lemmas will also be required for maintaining other relations as mentioned
above. We further assume throughout in this section that the embedded
numbering is the anti-clockwise one, and note that the same relations
that we maintain for the anti-clockwise embedding can be maintained
for the clockwise embedding.
\begin{lem}
\label{lem-one-face-for-2-distict-vertices}In a 3-connected planar
graph $G$, two distinct vertices not connected by an edge cannot
both lie on two distinct faces unless $G$ is a cycle. \end{lem}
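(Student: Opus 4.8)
The plan is to base the argument on two standard facts about $2$-connected plane graphs, both of which apply here since a $3$-connected graph is in particular $2$-connected: every face of a $2$-connected plane graph is bounded by a cycle (see \cite{diestel2005graph}), and a cycle that bounds two distinct faces must be all of $G$. I would argue by contradiction in the non-cyclic case. Suppose $u\neq v$, $(u,v)\notin E$, and $u$ and $v$ both lie on two distinct faces $f_1\neq f_2$; let $C_1=\partial f_1$ and $C_2=\partial f_2$ be the corresponding boundary cycles, each of which passes through both $u$ and $v$. Since $u$ and $v$ are non-adjacent in $G$ they are in particular not consecutive along $C_1$, so $C_1=P_1\cup P_1'$ for two internally disjoint $u$--$v$ paths, each with at least one internal vertex; likewise $C_2=P_2\cup P_2'$.

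Now split on whether $C_1=C_2$. If $C_1=C_2=:C$, then the Jordan curve $C$ cuts the sphere into exactly two open regions, which are the two faces $f_1$ and $f_2$; since a face contains no point of $G$, every vertex and edge of $G$ lies on $C$, so $G=C$ is a cycle --- this is the exceptional case allowed by the statement. If $C_1\neq C_2$, I would show $\{u,v\}$ is a separating pair, contradicting $3$-connectedness, so that this case cannot occur. The crucial geometric point is that $f_2$ is connected and disjoint from $G\supseteq C_1$, hence lies inside one of the two regions into which $C_1$ splits the sphere, and it cannot lie in the region $f_1$ itself, because distinct faces are distinct connected components of the complement of $G$; hence $f_2$, and with it $C_2=\partial f_2$, lies in the \emph{closed} complementary region $\overline{D_1}$, and symmetrically $C_1$ lies in $\overline{D_2}$. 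Choosing an arc of $C_2$, say $P_2$, that does not coincide with $P_1$ or $P_1'$ (possible precisely because $C_1\neq C_2$), one gets an internal vertex of $P_2$ lying strictly in the open region $D_1$, while an internal vertex of the corresponding arc of $C_1$ lies strictly in $D_2$; since no walk in $G-\{u,v\}$ can cross the Jordan curve $C_1$, these two vertices lie in different components of $G-\{u,v\}$, giving the forbidden $2$-separator.

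The step I expect to be the main obstacle is exactly this last one: turning the topological picture into a genuine combinatorial $2$-cut, and in particular handling the situation where $C_1$ and $C_2$ share vertices besides $u$ and $v$, so that their arcs interleave. I would deal with this by working throughout on the sphere with Jordan-curve separation and by choosing the arc $P_2$ carefully so that it is forced to leave $C_1$ at an internal vertex. As a cleaner alternative, one may instead quote the standard structural fact that in a $3$-connected plane graph which is not a cycle any two distinct faces meet in at most one edge; then $V(f_1)\cap V(f_2)\supseteq\{u,v\}$ would have to consist of the two endpoints of a common edge $uv$, contradicting $(u,v)\notin E$, and the lemma follows in a line (with $G$ a cycle as the sole exception where the "at most one edge" fact fails).
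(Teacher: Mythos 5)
Your route differs from the paper's: you work directly with the boundary cycles $C_1,C_2$, split cleanly on whether $C_1=C_2$, and try to use $C_1$ itself as the separating Jordan curve, whereas the paper (implicitly, via its Figure~\ref{fig5:Two-distinct-vertices} and Menger/Jordan appeal) uses a Jordan curve drawn through the \emph{interiors} of $f_1$ and $f_2$. The case split and the observation $f_2\subseteq D_1$ are nice additions, and your $C_1=C_2\Rightarrow G$-is-a-cycle handling is cleaner than the paper's. But the step you flag as the main obstacle is indeed a genuine gap, and no amount of careful arc-choosing will close it: $C_1$ cannot serve as the separating curve, because $G$ sits \emph{on} $C_1$. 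A walk in $G-\{u,v\}$ may enter $C_1$ at any interior vertex of $P_1$ or $P_1'$, move along it, and leave; nothing forces it to ``cross'' $C_1$ transversally. In particular, your two witnesses --- an internal vertex $a$ of $P_2$ strictly in $D_1$, and an internal vertex $b$ of an arc of $C_1$ strictly in $D_2$ --- both lie in $\overline{D_1}$, and may well be joined by a path in $G-\{u,v\}$ that stays in $\overline{D_1}$ (indeed $a$ could be adjacent to $b$). So the claimed $2$-cut is not established.

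The fix, and what the paper is gesturing at, is to take $\gamma=\gamma_1\cup\gamma_2$ where $\gamma_i$ is an arc from $u$ to $v$ whose interior lies in $f_i$. Then $\gamma$ is a Jordan curve with $\gamma\cap G=\{u,v\}$. Near $u$, $\gamma$ crosses $C_1$ transversally ($\gamma_1$ enters $f_1$, $\gamma_2$ enters $D_1\supseteq f_2$), so the two arcs $\mathring P_1,\mathring P_1'$ of $C_1\setminus\{u,v\}$ leave $u$ on opposite sides of $\gamma$; being connected and disjoint from $\gamma$, each lies entirely in one of the two open sides, hence on opposite sides. Since both arcs are nonempty (here non-adjacency of $u,v$ is used) and any walk in $G-\{u,v\}$ is disjoint from $\gamma$, the arcs lie in different components of $G-\{u,v\}$, contradicting $3$-connectedness. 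Your alternative via the polyhedral fact does work, but be careful which form you cite: ``two distinct faces share at most one common \emph{edge}'' says nothing on its own about sharing two non-adjacent vertices; you need the stronger statement that two distinct face boundaries of a $3$-connected plane graph intersect in a vertex, an edge, or nothing (the face lattice of a $3$-polytope), from which the lemma is immediate.
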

\begin{proof}
\begin{figure}[H]
\begin{centering}
\includegraphics[scale=0.5]{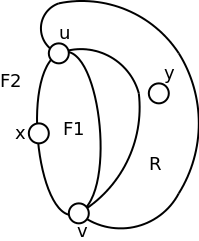}
\par\end{centering}

\caption{\label{fig5:Two-distinct-vertices}Two distinct vertices can lie on
atmost 1 face in a 3-connected planar graph}

\end{figure}
(Refer to Figure \ref{fig5:Two-distinct-vertices}) We prove by contradiction.
Assume there exists such a pair of vertices $(u,v)$. We show that
$(u,v)$ forms a separating pair. Consider the face $F_{1}$, the
outer face $F_{2}$ and the region $R$ representing the remaining
graph. Consider a vertex $x$ on a path from $u$ to $v$ and another
vertex $y$ in $R$. Since the graph is 3-connected, there exist 3
distinct paths between $x$ and $y$, by Menger's theorem. Since $x$
lies on $F_{1}$ and $F_{2}$, the path from $x$ to $y$ must pass
through atleast one of these faces, by Jordan's Curve theorem. But
that will split faces $F_{1}$ or $F_{2}$ into two distinct regions.
As such, any path from $x$ to $y$ must pass through $u$ or $v$,
making $(u,v)$ a separating pair. The other case is when the region
$R$ is empty, which makes G a cycle.
\end{proof}
As a corrollary to the above theorem, we get:
\begin{cor}
In a 3-connected planar graph $G$, two distinct vertices when connected
by an edge splits one face into two new faces, creating exactly $1$
new face.
\end{cor}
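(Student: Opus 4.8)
The plan is to read off this statement directly from Lemma~\ref{lem-one-face-for-2-distict-vertices}, combined with the observation that planarity is preserved by the insertion precisely when the two endpoints already share a face. So I would begin by fixing the two distinct vertices $u,v$ that are about to be joined. In the dynamic setting the graph before the insertion is in state~$A$ (3-connected and planar) and, since we are \emph{adding} the edge $\{u,v\}$, it contains no edge between $u$ and $v$; hence Lemma~\ref{lem-one-face-for-2-distict-vertices} applies and gives that $u$ and $v$ lie together on \emph{at most} one face of the current embedding. (The ``unless $G$ is a cycle'' escape clause does not bite here: a 3-connected graph on at least four vertices has no separating pair and is therefore not a cycle, and on fewer vertices the statement is vacuous up to multi-edges.) Conversely, because the graph after the insertion is assumed to be in state~$A$ as well, it is still 3-connected and planar, so by Whitney's theorem its embedding is just the unique embedding of $G$ with one extra edge; that extra edge must therefore be drawable as a simple arc inside some face $F$ of $G$ whose boundary walk passes through both $u$ and $v$. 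Combining the two directions, $u$ and $v$ lie on \emph{exactly} one common face $F$, and the new edge lies entirely inside $F$.

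Next I would finish with the standard planar bookkeeping. The curve representing $\{u,v\}$ is a simple arc in the open disk $F$ whose two endpoints lie on $\partial F$; by a Jordan-arc argument (the same topological ingredient used in the proof of Lemma~\ref{lem-one-face-for-2-distict-vertices}) such an arc separates $F$ into exactly two connected regions $F_{1}$ and $F_{2}$. Every face of $G$ other than $F$ is disjoint from the interior of $F$ and hence is entirely unaffected by drawing the arc. Therefore the face set of the new graph is obtained from that of $G$ by deleting $F$ and adding $F_{1}$ and $F_{2}$, so the number of faces goes up by exactly one. (Equivalently, one may invoke Euler's formula: $G$ is connected, so $|V|-|E|+|\text{faces}|=2$; the insertion leaves $|V|$ fixed and increases $|E|$ by one, forcing the number of faces to increase by one.)

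I expect the only real obstacle to be the step that pins down ``the new edge sits inside a single face of the \emph{old} embedding'': a priori, adding an edge could conceivably be compatible with a different embedding of $G$, in which case talking about ``the face that gets split'' would be ambiguous. This is exactly where 3-connectivity and Whitney's uniqueness theorem are needed, and it is also what lets us discard the cycle exception inherited from Lemma~\ref{lem-one-face-for-2-distict-vertices}. Once that rigidity of the embedding is invoked, the rest of the argument is routine.
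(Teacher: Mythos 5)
The paper states this corollary without proof, treating it as an immediate consequence of Lemma~\ref{lem-one-face-for-2-distict-vertices}; your argument is a correct and careful unpacking of exactly that inference, supplying the two standard ingredients the paper leaves implicit (Whitney's uniqueness to pin the new edge inside a single common face, and a Jordan-arc or Euler-formula count to see that face splits into precisely two). You also correctly flag why the ``$G$ is a cycle'' escape clause in the lemma is harmless here and where the state-$A$ (post-insertion planarity and 3-connectivity) hypothesis is needed, both of which the paper tacitly assumes.
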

Now, due to the above theorem, we can update the $Emb$ and the $Face$
relations.

\subsubsection{$insert(a,b)$}

Any edge $\{a,b\}$ that is inserted lies on a particular face, say
$f$. Consider the edges from vertex $a$. Since $a$ lies on the
face $f$, exactly two edges from $a$ will lie on the boundary of
$f$. Let these two edges, considered anti-clockwise be $e_{1}$ and
$e_{2}$, having the embedded numbering $n_{1}$ and $n_{2}$, respectively.
Note that $n_{2}=(n_{1}+1)$ mod $d_{a}$, where $d_{a}$ is the degree
of $a$. This is because if this was not so, there would be some other
edge in the anti-clockwise direction between $e_{1}$ and $e_{2}$,
which would mean either we have selected a wrong face or the wrong
edges $e_{1}$ and $e_{2}$.

Hence, when we insert the new edge $\{a,b\}$, we can give $\{a,b\}$
the embedded number $n_{2}$, and all the other edges around $a$
which have an embedded number more than $n_{2}$, can be incremented
by $1$. Similarly we do this for $b$. \\

\label{emp-emb-insert} The queries in a high-level form during insertion
are given in \ref{alg-emb,face-insert}. An illustration of the queries
is given in Figure \ref{figB:Splitting-(during-insertion)}.

\subsubsection{$delete(a,b)$}

Since we are in state \emph{A} as mentioned in the Preliminaries part
IV, we expect the graph to be 3-connected and planar once the edge
$(a,b)$ is removed. Hence by the converse of Lemma \emph{\ref{lem-one-face-for-2-distict-vertices}}
above, exactly two faces will get merged. As such, our queries now
will be the exact opposite to those for insertion. \\

\label{emp-emb-delete} The queries in a high-level form during deletion
are given in \ref{alg-emb,face-delete}.

\subsubsection{Rotating and flipping the Embedding}

We will show how to rotate or flip the embedding of the graph in \textsf{FOL}
if required, as it will be necessary for further sections.

The type of rotation that we will accomplish in this section is as
follows: In any given \textsc{CBFS} tree $[v,v_{e}]$, for every vertex
$x$, we rotate the embedding around $x$ until its parent gets the
least embedding number, number $0$ (that is the $0$'th number in
the ordering). For the root vertex $v$ which has no parent, we give
$v_{e}$ the least embedding number.

This scheme is like a\emph{'normal'} form for ordering the edges around
any vertex, or\emph{ 'normalizing'} the embedding. We show in this
section that this can be done in \textsf{FOL}. Also, flipping the
ordering from anti-clockwise to clockwise is (very easily) in \textsf{FOL}.

We shall create the following relation: $Emb_{p}(v,v_{e},t,x,n_{x})$,
which will mean that in the \textsc{CBFS} tree $[v,v_{e}]$, for some
vertex $t$, if the parent of $t$ is $t_{p}$, and if the edge $(t,t_{p})$
(or the vertex $t_{p}$) is given the embedded number $0$, then the
edge $(t,x)$ (or the vertex $x$) gets the embedded number $n_{x}$.

Note that our relation $Emb$ was independent of any particular \textsc{CBFS}
tree, since it depended only on the structure of the 3-connected planar
graph and not on any \textsc{CBFS} tree we chose. But $Emb_{p}$ depends
on the chosen \textsc{CBFS} tree. Another thing to note is that we
do not maintain the relation $Emb_{p}$ in our vocabulary $\tau$,
since it can be easily created \textsf{in FOL} from the rest of the
relations whenever required.

We create the relation $Emb_{p}$ in the following manner. In every
\textsc{CBFS} tree $[v,v_{e}]$, for every vertex $t$, we find the
degree ($d_{t}$) and the parent ($t_{p}$) of $t$, and the embedded
number $n_{p}$ of $t_{p}$. Then for every vertex $x$ in the neighbourhood
of $t$ with embedded numbering $n_{x}$, we do $n_{x}=(n_{x}-n_{p})$
mod $d_{t}$. Refer to Figure \ref{fig7:Normalized-CBFS-tree} for
an illustration. 

\begin{figure}
\begin{centering}
\includegraphics[scale=0.5]{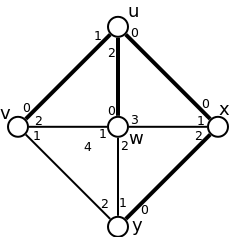}
\par\end{centering}

\caption{\label{fig7:Normalized-CBFS-tree}The CBFS tree of $[u,x]$ with the
normalized rotated embedding}

\end{figure}

We also create the relation $Emb_{f}$ which will contain the flipped
or the clockwise embedding $\pi^{c}$.

\label{emp-emb-rotate} The queries in a high-level form for rotating
and flipping the embedding are given in \ref{alg-emb-rotate-flip}.

A note said throughout in the manuscript is necessary to repeat here.
Though the parent of $v$ is null in $[v,v_{e}]$, we allow the parent
of $v$ to be $v_{e}$, so as to keep the queries neater. If this
convention is not required, then the special case of the parent of
$v$ can be handled easily by modifying the queries.

This shows that\emph{ }the embedding can be \emph{flipped} and \emph{normalized}
in \textsf{FOL}. We conclude the following:
\begin{thm}
The embedding of a 3-connected planar graph can be maintained, normalized
and flipped in $\textsf{\textsf{\textsf{\textsf{DynFO}}}}$. 
\end{thm}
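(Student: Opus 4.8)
The plan is to verify the three claims implicit in this theorem---that $Emb(v,x,n_x)$ (together with $Face(f,x,y,z)$) can be \emph{maintained} under edge insertions and deletions in $\textsf{FOL}$, that the embedding can be \emph{normalized} to $Emb_p$ in $\textsf{FOL}$, and that it can be \emph{flipped} to the clockwise rotation $Emb_f$ in $\textsf{FOL}$---each reducing to the availability of ordering and summation in $\textsf{DynFO}$ (Theorem on Ordering and Arithmetic) and the structural fact recorded in Lemma~\ref{lem-one-face-for-2-distict-vertices} and its corollary.

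First I would handle maintenance under $insert(a,b)$. By the Corollary to Lemma~\ref{lem-one-face-for-2-distict-vertices}, inserting $\{a,b\}$ into a 3-connected planar graph that stays 3-connected and planar splits exactly one face $f$ into two. I would locate $f$ by a first-order query: $f$ is the unique face whose boundary contains both $a$ and $b$ (using the transitive-closure-of-edges encoding of $Face$, so that ``$a$ and $b$ on the boundary of $f$'' is expressible). Around $a$ I identify the two consecutive boundary edges $e_1,e_2$ of $f$ with embedded numbers $n_1,n_2 = (n_1+1) \bmod d_a$; the new edge $\{a,b\}$ receives number $n_2$, and every old edge at $a$ with number $\geq n_2$ is incremented by one (a first-order update using $Sum$ and $O$, with the $\bmod d_a$ handled by at most one comparison and subtraction, all constant-arity arithmetic). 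I do the symmetric thing at $b$. Updating $Face$: the old face $f$ disappears and two new faces appear, whose anti-clockwise boundary closures are obtained by ``cutting'' the closure of $f$ at $a$ and $b$ and splicing in the edge $\{a,b\}$ on each side---this is a first-order rewriting of a ternary relation since we only need to re-route paths through the two new vertices on the inserted edge. For $delete(a,b)$ I would invoke the converse: since we remain in state $A$, exactly two faces merge, and the queries are literally the reverse---find the two faces incident to $\{a,b\}$, merge their boundary closures into one by identifying the edge, decrement embedded numbers above the deleted one at $a$ and at $b$. All of this is stated to be detailed in Algorithms~\ref{alg-emb,face-insert} and~\ref{alg-emb,face-delete}, and I would simply argue correctness from the Corollary.

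Next I would address normalization. Given the maintained $Emb$, I define $Emb_p(v,v_e,t,x,n_x)$ directly by a first-order formula over $Emb$, $Level$, and the parent relation: for each CBFS tree $[v,v_e]$ and each vertex $t$ with parent $t_p$ (treating $v_e$ as the ``parent'' of $v$ by the stated convention), read off the degree $d_t$ and the old number $n_p = \pi_t(t_p)$, and set $Emb_p(v,v_e,t,x,n_x)$ iff $Emb(t,x,m)$ holds for some $m$ with $n_x = (m - n_p) \bmod d_t$. The modular subtraction is again constant-arity arithmetic: $m - n_p$ if $m \geq n_p$, else $m - n_p + d_t$, all via $Sum$ and $O$, so $Emb_p$ is $\textsf{FOL}$-definable and need not be stored. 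Flipping is easier still: $Emb_f(v,x,n_x)$ iff $Emb(v,x,m)$ with $n_x = (d_v - m) \bmod d_v$ (or equivalently $n_x = (d_v - 1 - m)\bmod d_v$, depending on whether we fix the parent's number), reversing the cyclic order around each vertex---one subtraction per vertex, plainly first-order.

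I expect the main obstacle to be the bookkeeping of the $Face$ relation rather than $Emb$. Because faces are stored as the transitive closure of their boundary edges (to make splitting and merging first-order-maintainable) and are labelled by single symbols despite there being possibly $\Theta(n)$ faces---the paper explicitly defers the honest two-symbol labelling---I would need to argue carefully that when one face splits into two, fresh labels can be assigned in $\textsf{FOL}$ (e.g.\ reuse the old label for one part and take a canonically chosen previously-unused symbol from the maintained universe $U$ for the other), and that the new ternary path-closure of each part is correctly computed from the old one without invoking a genuine transitive closure (it is not, because we only insert two new boundary vertices and re-route, so the update is a bounded surgery on the existing closure). Establishing that this surgery is genuinely expressible by a quantifier-rank-$O(1)$ formula---and that the symmetric merge on deletion recovers exactly the right closure---is the delicate point; once that is granted, correctness of everything else follows immediately from Lemma~\ref{lem-one-face-for-2-distict-vertices}, its Corollary, and the already-established $\textsf{DynFO}$ arithmetic, yielding the theorem.
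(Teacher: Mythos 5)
Your proposal takes essentially the same route as the paper: it rests on Lemma~\ref{lem-one-face-for-2-distict-vertices} and its corollary to show that insertion splits exactly one face and deletion merges exactly two, updates $Emb$ by constant-arity modular arithmetic around the two endpoints, maintains $Face$ as a betweenness/closure relation so the split and merge are bounded first-order surgeries, chooses a fresh face label from the maintained universe, and realizes $Emb_p$ and $Emb_f$ as $\textsf{FOL}$-definable views over $Emb$, $Level$ and the parent relation. This matches the paper's construction and its appendix queries essentially point for point.
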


\subsection{Maintaining $CBFSEdges(v,v_{e},s,t)$ and $CPath(v,v_{e},x,y,z)$}

In this section, we show how to maintain the final two relations via
insertions and deletions of tuples that will help us to decide the
isomorphism of two graphs. The relations are almost completely similar
to the ones used for Breadth-First Search in the previous section.
The only difference which arises is due to the uniqueness of the paths
in Canonical Breadth-First Search Trees. We do not rewrite the $Level(v,x,l)$
since it will be exactly similar to the general BFS case.

\subsubsection{$insert(a,b)$}

The \textsc{CBFS} tree is unique if the path to every vertex $x$
from the root vertex $v$ is uniquely defined. How shall we choose
the unique path? First, we consider the paths with the shortest length.
This is exactly same as in Breadth-First Search seen in the previous
section. But unlike \textsc{BFS}, where we chose the shortest path
arbitrarily (that is by lexicographic ordering during insertion/deletion),
we will very precisely choose one of the paths from the set of shortest
paths, by Definition \ref{def-can-order}. Intuitively, Definition
\ref{def-can-order} chooses the path based on its orientation according
to the embedding $\pi$.

An important observation from Definition \ref{def-can-order} is the
following: \emph{Distance has preference over Orientation}. This means
that if there are two paths $P_{1}$ and $P_{2}$ from $v$ to $x$
in $[v,v_{e}]$ (due to the insertion of an edge which created a cycle
in the tree $[v,v_{e}]$), though $P_{2}<_{c}P_{1}$, the path $P_{1}$
will be chosen if $|P_{1}|<|P_{2}|$ irrespective of the canonical
ordering $<_{c}$.

Consider some $[v,v_{e}]$. During insertion of $\{a,b\}$, let the
old path (from $v$ to some $x$) be $P_{1}$ and assume that the
new path $P_{2}$ passes through $(a,b)$. If $|P_{1}|<|P_{2}|$ or
$|P_{1}|=|P_{2}|\wedge P_{1}<_{c}P_{2}$, the path to $x$ does not
change, and all the edges and tuples to $x$ in the old relations
will belong to the new relations. If $|P_{2}|<|P_{1}|$ or $|P_{1}|=|P_{2}|\wedge P_{2}<_{c}P_{1}$,
the path to $x$ changes. In this case, the new path will be from
$v$ to $a$ in $[v,v_{e}]$, the edge $\{a,b\}$ (from $a$ to $b$),
and the path from $b$ to $x$ in $[b,b_{e}]$. The way we choose
$b_{e}$ is as follows (Refer to Figure \ref{fig6:Choosing-the-vertex}
for an illustration): We find the set of vertices $C$ that are adjacent
to $b$ and are at $level_{v}(b)+1$. Since $a$ will be the parent
of $b$ in $[v,v_{e}]'$, we rotate the embedding around $b$ until
$a$ gets the value $0$, and the choose $b_{e}$ to be the vertex
in $C$ that gets the least embedding number.

\begin{figure}[H]
\begin{centering}
\includegraphics[scale=0.5]{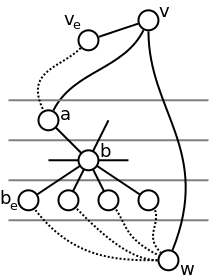}
\par\end{centering}

\caption{\label{fig6:Choosing-the-vertex}Choosing the vertex $b_{e}$ on the
new path from $v$ to $w$}

\end{figure}

To check for the condition $P_{1}<_{c}P_{2}$, we do the following
(Refer to Figure \ref{fig8:Illustrating-can-order}): We create the
set $Emb_{p}$ so that the parent of each vertex has the least embedding
number. Let the path $P_{a}$ denote the path from $v$ to $a$, which
will be a subset of $P_{2}$. We choose the vertex which is the least
common ancestor of $a$ and $x$, say $d=lca_{a,x}$, and normalize
the embedding so that $d_{p}$, the parent of $d$, gets the embedding
number $0$. Existence of $lca_{a,x}$ is guaranteed since $v$ lies
on both $P_{1}$ and $P_{a}$. Now consider the edge $e_{1}=(lca_{a,x},d_{1})$
on $P_{1}$ and $e_{2}=(lca_{a,x},d_{2})$ on $P_{2}$. Since the
embedding is normalized, we see which edge gets the smaller embedding
number around the vertex $lca_{a,x}$. The path on which that edge
lies will be the lesser ordered path according to $<_{c}$. It is
nice to pause here for a moment and observe that this was possible
since the embedding was 'normalized', otherwise it would not have
been possible.

\begin{figure}[H]
\begin{centering}
\includegraphics[scale=0.5]{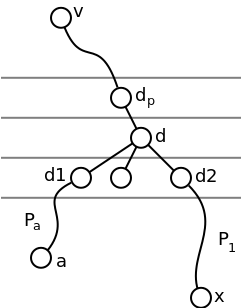}
\par\end{centering}

\caption{\label{fig8:Illustrating-can-order}Illustrating $P_{a}<_{c}P_{1}$
by considering $d=lca_{a,x}$, $d_{1}$, and $d_{2}$ }

\end{figure}

One more thing needs to be shown. In Lemma \ref{lem-level-change-during-insertion},
we proved that for any vertex $x$, its level cannot change both in
the \textsc{BFS} trees of $a$ and $b$. In the previous case of \textsc{BFS},
as per our algorithm, the level not being changed implied the path
not being changed. But that is not the case in \textsc{CBFS} trees.
In \textsc{CBFS} trees, the level not changing may still imply that
the path changes (due to the $<_{c}$ ordering on paths). Hence, it
may be possible that though the level of the vertex $x$ changes in
only one of the \textsc{CBFS} trees, its actual path changes in \emph{both}
the \textsc{CBFS} trees. We need to show that this is not possible.
And the reason this is necessary is because (just like the previous
case) the updation of the path will depend on one specific path to
vertex $x$ in the \textsc{CBFS} tree of $a$ or $b$ which has not
changed.
\begin{lem}
After the insertion of edge $\{a,b\}$, the path to any vertex $x$,
cannot change in both the CBFS trees $[a,a_{e}]$ and $[b,b_{e}]$,
for all $a_{e},b_{e}$. \end{lem}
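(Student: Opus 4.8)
The plan is to mimic the structure of the proof of Lemma~\ref{lem-level-change-during-insertion}, but now keeping track of the canonical ordering $<_c$ on paths rather than just path lengths. First I would set up the notation: before the insertion, let $P_a = path_{a,a_e}(a,x)$ and $P_b = path_{b,b_e}(b,x)$ be the old canonical paths to $x$ in the two trees. By Lemma~\ref{lem-level-change-during-insertion} the level of $x$ cannot change in both trees, so without loss of generality assume $level_a(x)$ does not change (so $|P_a|$ stays the same after insertion). It then suffices to show that the canonical path to $x$ in $[a,a_e]$ does not change either; I would argue this by contradiction, assuming both the path to $x$ in $[a,a_e]$ changes and the path to $x$ in $[b,b_e]$ changes.

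The key step is to analyze what a changed path must look like. Since the only new edge is $\{a,b\}$, any new canonical path to $x$ in $[a,a_e]$ that differs from $P_a$ must use the edge $\{a,b\}$, hence factors as (the new path from $a$ to $b$) followed by (a path from $b$ to $x$); but the new path from $a$ to $b$ is just the single edge $\{a,b\}$, so the new path is $\{a,b\}$ concatenated with a path from $b$ to $x$ of length $|P_a| - 1$, i.e.\ $level_b(x)$ drops to $|P_a| - 1 = level_a(x) - 1$. That already forces $level_b(x)$ to strictly decrease, contradicting the case assumption that the level of $x$ changes only in $[b,b_e]$ --- wait, that is the wrong direction; I need to be careful. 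The correct dichotomy I would push through is: either the level of $x$ genuinely drops in one tree (handled exactly as in Lemma~\ref{lem-level-change-during-insertion}, which says it cannot drop in both), or the level is unchanged in that tree but the canonical representative among shortest paths switches. In the latter subcase, the new shortest path through $\{a,b\}$ has the same length $|P_a|$, so $level_b(x) = |P_a| - 1$ already held before the insertion --- meaning $x$ was already at distance $|P_a|-1$ from $b$, so adding $\{a,b\}$ cannot decrease $level_a(x)$ but can only tie it; and symmetrically $level_a(x) = |P_b| - 1$ would have to hold. Combining $|P_a| = |P_b| - 1$ and $|P_b| = |P_a| - 1$ gives a contradiction, so the path can change (via level-preserving canonical reselection) in at most one of the two trees.

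The main obstacle I expect is the bookkeeping around the \emph{level-preserving but orientation-changing} case: I must show that if the canonical path to $x$ in $[a,a_e]$ switches to one running through $\{a,b\}$ without changing length, then the analogous switch in $[b,b_e]$ is impossible. The clean way is the length arithmetic above ($level_b(x)$ would have to equal both $|P_a|-1$ and something incompatible with $|P_b|-1$), invoking the triangle-type inequalities $|path_{a}(a,x)| \le 1 + |path_{b}(b,x)|$ and its symmetric counterpart that already appear in the proof of Lemma~\ref{lem-level-change-during-insertion}. I would also need a brief remark that the statement quantifies over all $a_e, b_e$: since the \emph{length} $level_a(x)$ is independent of the chosen embedding edge $a_e$ (the embedding $\pi$ is unique by Whitney's theorem), the level-change analysis is uniform in $a_e, b_e$, and only the canonical representative depends on the embedding edge, which is exactly the case we have just ruled out in both trees simultaneously. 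This invariant path in one of the two CBFS trees is then what the update queries in the next subsection will use to rebuild the changed path.
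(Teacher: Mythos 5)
Your approach is the same as the paper's in all essentials: invoke Lemma~\ref{lem-level-change-during-insertion} to fix a tree in which the level of $x$ is preserved, observe that a changed canonical path must route through $\{a,b\}$, and obtain a numerical contradiction between path lengths. Where you and the paper diverge is in the choice of WLOG, and that choice is exactly what makes your case analysis messy while the paper avoids it. You take as WLOG only that $level_a(x)$ is unchanged; the paper takes $|p_a| \leq |p_b|$, which by the proof of Lemma~\ref{lem-level-change-during-insertion} both implies that $level_a(x)$ is unchanged \emph{and} gives (after a short argument) the bound $|p'_{b,b_e}(b,x)| \geq |p_{a,a_e}(a,x)|$, valid whether or not the level of $x$ drops in $b$'s tree. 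With those two facts the paper finishes in a single inequality chain,
\[
|p'_{a,a_e}(a,x)| = 1 + |p'_{a,a_e}(b,x)| \geq 1 + |p'_{b,b_e}(b,x)| \geq 1 + |p_{a,a_e}(a,x)| > |p_{a,a_e}(a,x)|,
\]
with no need to separate the level-preserving and level-dropping subcases.

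Your weaker WLOG forces you to branch on what happens in $[b,b_e]$, and you only explicitly close the subcase where both switches are level-preserving (deriving $|P_a| = |P_b| - 1$ and $|P_b| = |P_a| - 1$). The mixed subcase --- level preserved in $[a,a_e]$ but dropped in $[b,b_e]$ --- is needed and is only gestured at (``something incompatible with $|P_b|-1$''); the visible mid-argument retraction (``wait, that is the wrong direction'') is a symptom of this awkward split. That subcase does close: the drop forces $level'_b(x) = 1 + |P_a|$ via $\{a,b\}$, while the level-preserving switch in $[a,a_e]$ forces $level'_b(x) = |P_a| - 1$, a contradiction. So your plan would succeed if carried through, but it is not yet a proof; adopting the paper's stronger WLOG of $|p_a| \leq |p_b|$ collapses the case split and is the cleaner way to finish. (One point both you and the paper leave implicit: the claim that a changed canonical path must route through $\{a,b\}$ rests on the observation that inserting $\{a,b\}$ only shifts the normalized embedding numbers around $a$ and $b$, preserving the relative $<_c$ comparison of any two pre-existing shortest paths; you at least state the claim explicitly, which is good.)
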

\begin{proof}
Before the insertion of $\{a,b\}$, let $p_{a}=path_{a,a_{e}}(a,x)$,
and $p_{b,b_{e}}=path_{b,b_{e}}(b,x)$. Without loss of generality,
let $|p_{a}|\leq|p_{b}|$. It suffices to show that after the insertion
of $\{a,b\}$, the path from $a$ to $x$, $p'_{a,a_{e}}(a,x)$, is
the same as $p_{a,a_{e}}(a,x)$. As seen in Lemma \ref{lem-level-change-during-insertion},
$|p{}_{a,a_{e}}(a,x)|=|p'_{a,a_{e}}(a,x)|\leq|p'_{b,b_{e}}(b,x)|$.
If the actual path changes, i.e. $p'_{a,a_{e}}(a,x)\not=p_{a,a_{e}}(a,x)$,
then $|p'_{a,a_{e}}(a,x)|=1+|p'_{a,a_{e}}(b,x)|\geq1+|p'_{b,b_{e}}(b,x)|\geq1+|p_{a,a_{e}}(a,x)|>|p_{a,a_{e}}(a,x)|$
which is a contradiction. 
\end{proof}
\label{emp-cbfs-insert} The queries in a high-level form to maintain
$CBFSEdges$ and $CPath$ during insertion are given in \ref{alg-cbfs-insert}.

\subsubsection{$delete(a,b)$}

For the deletion operation, we choose the edge from $PR_{min}$ based
on the $<_{c}$ relation. Note that when some edge $\{a,b\}$ is deleted,
the path to some vertex $x$ in $[v,v_{e}]$ cannot change if $\{a,b\}$
does not lie on the path. Other things remain exactly similar to the
general case. \label{emp-cbfs-delete}The queries are given in \ref{alg-cbfs-delete}.

\section{Canonization and Isomorphism Testing}

In \cite{thierauf2010isomorphism}, the 3-connected planar graph is
canonized after performing Canonical BFS by using Depth-First Search
(DFS). Performing DFS or any method that employs computing the transitive
closure in any manner cannot be used here since it would not be possible
in $\textsf{FOL}$, and most of the known methods of canonization
seem to require computing the transitive closure. Note that a canon
is required for condition 1 in Definition \ref{def-dynamiccomplexityclass}
to hold. What we seek is a method to canonize the graph, which depends
only on the properties of vertices that can be inferred globally.

To achieve this, we shall label each vertex with a vector. Though
the label will not be succinct now, it will be possible to create
it in $\textsf{FOL}$. 

Essentially\emph{, the canon for a vertex $x$ in some CBFS tree $[v,v_{e}]$
will be a set of tuples $(l,h)$ of the levels and (normalized) embedding
numbers of ancestors of $x$}. 
\begin{defn}
\label{def:canon}Let canon for each vertex $x$ in $[v,v_{e}]$ be
represented by $Canon_{v,v_{e}}(x)$. Then,

\[
{\displaystyle Canon_{v,v_{e}}(x)=\{(l,h):\;\exists q,q_{p},\ C\ \wedge\ L\ \wedge\ P\ \wedge\ H\}}
\]

where,
\begin{itemize}
\item ${\displaystyle C:\ CPath(v,v_{e},v,x,q),}$
\item ${\displaystyle L:\ l=level_{v}(q)}$,
\item ${\displaystyle P:\ parent_{v,v_{e}}(q,q_{p})}$,
\item ${\displaystyle H:\ h=emnum_{v,v_{e}}(q_{p},q)}$\\

\end{itemize}
\end{defn}
\begin{lem}
For any CBFS tree $[v,v_{e}]$, for any two vertices $x$ and $y$,
$x=y\Leftrightarrow Canon(x)=Canon(y)$ \end{lem}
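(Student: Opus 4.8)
The plan is to prove both directions of the biconditional. The forward direction ($x = y \Rightarrow Canon(x) = Canon(y)$) is immediate: if $x$ and $y$ are the same vertex, then the path $path_{v,v_e}(v,x)$ equals $path_{v,v_e}(v,y)$ vertex-for-vertex, so the set of ancestors $q$ (vertices $q$ with $CPath(v,v_e,v,x,q)$) coincides, and for each such $q$ the values $l = level_v(q)$, $q_p = parent_{v,v_e}(q)$, and $h = emnum_{v,v_e}(q_p,q)$ are intrinsic to $q$ and the fixed CBFS tree, hence the two canon sets are literally equal. The content of the lemma is therefore entirely in the reverse direction.

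For the reverse direction I would argue by contradiction: suppose $x \neq y$ but $Canon_{v,v_e}(x) = Canon_{v,v_e}(y)$. First observe that $Canon_{v,v_e}(x)$ always contains the pair corresponding to $q = x$ itself, since $x$ lies on $path_{v,v_e}(v,x)$; write this pair as $(l_x, h_x)$ with $l_x = level_v(x)$. Likewise $Canon_{v,v_e}(y)$ contains $(l_y, h_y)$ with $l_y = level_v(y)$. The key structural fact to extract is that in a CBFS tree, the pairs $(level_v(q), emnum_{v,v_e}(q_p,q))$ encountered as $q$ ranges over the ancestors of a vertex $x$ are \emph{strictly increasing in the first coordinate}: consecutive ancestors differ in level by exactly one, going $0, 1, 2, \dots, level_v(x)$. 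Consequently, from the set $Canon_{v,v_e}(x)$ one can recover the \emph{sequence} of pairs along the path by sorting on the first coordinate, and in particular $level_v(x)$ is the maximum first coordinate appearing in $Canon_{v,v_e}(x)$. So $Canon(x) = Canon(y)$ forces $level_v(x) = level_v(y) =: \ell$, and more generally forces the entire sorted sequence of (level, embedding-number) pairs along $path_{v,v_e}(v,x)$ and $path_{v,v_e}(v,y)$ to be identical.

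The main obstacle — and the real heart of the proof — is arguing that two \emph{distinct} vertices at the same level cannot have the same sequence of ancestor (level, normalized-embedding-number) pairs. Here I would induct on the level $\ell$. The base cases $\ell = 0$ (only the root $v$) and $\ell = 1$ (neighbours of $v$, distinguished by their embedding number $emnum_{v,v_e}(v,\cdot)$ around $v$, which is injective since $\pi_v$ is a permutation) are clear. For the inductive step, let $x \neq y$ both be at level $\ell$ with equal canon sequences. The pair for the ancestor at level $\ell - 1$ is the same for both, say it records parent $q_p$ at level $\ell-1$ with embedding number $h'$; by the induction hypothesis applied to the level-$(\ell-1)$ ancestors of $x$ and of $y$ (whose canon sequences are the respective prefixes, hence equal), those ancestors are the \emph{same} vertex — call it $p$. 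So $x$ and $y$ are both children of $p$ in the CBFS tree. But then the pair recorded for $q = x$ is $(\ell, emnum_{v,v_e}(p,x))$ and for $q = y$ is $(\ell, emnum_{v,v_e}(p,y))$; equality of canons forces $emnum_{v,v_e}(p,x) = emnum_{v,v_e}(p,y)$, i.e. $\pi_p(x) = \pi_p(y)$ after the same normalization, and since $\pi_p$ is a permutation on $N(p)$ this gives $x = y$, the contradiction. The one delicate point to handle carefully is that we must apply the induction hypothesis not to arbitrary vertices but precisely to the parent-pointers along the two paths, which is why recovering the full \emph{ordered} ancestor sequence (not just the unordered set) from $Canon$ via the strict monotonicity in level is the crucial preliminary step; I would state that monotonicity observation explicitly as the linchpin of the argument.
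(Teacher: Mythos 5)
Your proof is correct and rests on the same essential fact as the paper's: distinct children of a common ancestor in the CBFS tree receive distinct normalized embedding numbers, so the canons must disagree at the level just past the point where the two paths from $v$ diverge. The paper identifies that divergence point directly as $d=lca_{v,v_{e}}(x,y)$ and concludes in one step, while you reach the same point by induction on level from the root. Your route is slightly longer but also slightly more careful: you make explicit the monotonicity observation that the paper uses implicitly, namely that the level coordinates appearing in $Canon_{v,v_{e}}(x)$ are exactly $\{0,1,\dots,level_{v}(x)\}$ with one tuple per level, so the unordered canon \emph{set} already determines the ordered sequence of $(level,\,emnum)$ pairs along the ancestor path, and in particular determines $level_{v}(x)$ as the maximum level present. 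That observation also lets your argument uniformly handle the degenerate case $level_{v}(x)\neq level_{v}(y)$ (one vertex an ancestor of the other, where there is no $d_{x}$ or no $d_{y}$), which the paper's phrasing silently assumes away by speaking of both $(d,d_{x})$ and $(d,d_{y})$.
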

\begin{proof}
If two vertices are same, they have the same canon. If they are different,
it suffices to show that the canons necessarily differ at one point.
Let $d=lca_{v,v_{e}}(x,y)$. Since $d$ is the least common ancestor,
the path to $x$ and $y$ splits at $d$. From Definition \ref{def-can-order},
$(d,d_{x})$ and $(d,d_{y})$ are distinct edges, and they have a
\emph{different} embedding number. Hence the canons for $x$ and $y$
are necessarily different at the level of $d_{x}$ and $d_{y}$.
\end{proof}
It is now easy to canonize each of the CBFS trees in $\textsf{FOL}$.
Once each vertex has a canon, each edge is also uniquely numbered.
The main idea is this: A canon will in itself encode all the necessary
properties of the vertex, and the set of canons of all vertices become
the signature of the graph, preserving edges. The main advantage of
Definition \ref{def:canon} is that the canon of the graph can be
generated in $\textsf{FOL}$. It's worthwhile to observe how this
neatly beats the otherwise inevitable computation of transitive closure
(Theorem \ref{thm-Transitive-Closure-not-in-fol}) to canonize the
graph.

Hence, two 3-connected planar graphs $G$ and $H$ are isomorphic
if and only if for some CBFS tree $[g,g_{e}]$ of $G$, there is a
CBFS tree $[h,h_{e}]$, such that:
\begin{itemize}
\item $\forall x\exists y,\;(x\in G\wedge y\in H\wedge(Canon(x)=Canon(y)))$
and 
\item $\forall x_{1},x_{2},\;((Edge(x_{1},x_{2})\in G)\Leftrightarrow(Edge(Canon(x_{1}),Canon(x_{2}))\in H))$
\end{itemize}
$H$ implies either $H$ with the embedding $\rho$ or $H$ with flipped
embedding $\rho^{-1}$. It is evident that if the graphs are isomorphic,
there will be some CBFS tree in $G$ and $H$ whose canons will be
equivalent in the above sense. If the graphs are not isomorphic, no
canon of any CBFS tree could be equivalent, since it would then directly
give a bijection between the vertices of the graph that preserves
the edges, which would be a contradiction. 

Since we still need to precompute all the relations before the condition
of 3-connectivity is reached, isomorphism of $G$ and $H$ is in $\textsf{\textsf{\textsf{\textsf{DynFO}}}}^{+}$.
This brings us to the main conclusion of this section:
\begin{thm}
3-connected planar graph isomorphism is in $\textsf{\textsf{\textsf{\textsf{DynFO}}}}^{+}$ 
\end{thm}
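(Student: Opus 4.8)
The plan is to assemble the pieces built in Sections 3--6 into one dynamic procedure, handling separately the two regimes of the input described in Part IV of the Preliminaries. While the graph is in state \emph{B} (not yet, or no longer, both planar and 3-connected) we cannot hope to maintain the canonization relations in \textsf{FOL}, so we fall back to polynomial, i.e. \textsf{FO(LFP)}, update queries; this is legitimate precisely because Planar Graph Isomorphism is in \textsf{L} \cite{datta2009planar}, and the intermediate relations ($Edge$, $Level$, $Emb$, $Face$, $CBFSEdges$, $CPath$) are likewise computable in \textsf{L}, so the structure $T$ can be recomputed from scratch in polynomial time whenever needed. The only clause of Definition \ref{def-dynamiccomplexityclass} thereby relaxed is clause (4) --- the initialization, and any re-initialization forced by a return to state \emph{B}, is polynomial rather than \textsf{FO} --- which is exactly what $\textsf{DynFO}^{+}$ permits; moreover this relaxation is unavoidable, since no $\textsf{DynFO}$ procedure is known even to decide whether the current graph is planar and 3-connected.

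It then remains to handle state \emph{A}. Viewing the input as a structure encoding two graphs $G$ and $H$, I would maintain, for \emph{each} of $G$ and $H$ and for \emph{each} ordered pair $(v,v_e)$ with $(v,v_e)$ an edge, the relations $Emb$, $Face$, $Level$, $CBFSEdges$ and $CPath$ under edge insertions and deletions; this is in $\textsf{DynFO}$ by the theorems of Sections 4 and 5, and all of these relations have polynomial arity, so the bounded-universe clause (3) holds and the composed update map $g$ is \textsf{FO} by clause (2). From these relations a single first-order query produces the normalized embedding $Emb_{p}$ and, via Definition \ref{def:canon}, the relation $Canon_{v,v_e}(x)$ for every vertex $x$ of every \textsc{CBFS} tree of $G$ and of $H$, in both the $\pi$ and the $\pi^{c}$ orientation --- again a relation of polynomial size. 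Thus the whole structure $T$ is maintained in $\textsf{DynFO}$ as long as the graph stays in state \emph{A}.

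The membership test demanded by clause (1) is then a single $\textsf{FO}$ sentence evaluated on $T$: $G\cong H$ iff there is a \textsc{CBFS} tree $[g,g_e]$ of $G$ and a \textsc{CBFS} tree $[h,h_e]$ of $H$, in one of the two orientations, such that the set-valued map $x\mapsto Canon_{g,g_e}(x)$ restricted to $V_G$ is a bijection onto $V_H$ and $(x_1,x_2)\in E_G \Leftrightarrow (Canon(x_1),Canon(x_2))\in E_H$. Soundness is immediate, since such a sentence exhibits an edge-preserving bijection. Completeness uses the Lemma that distinct vertices of a single \textsc{CBFS} tree carry distinct canons, together with Whitney's theorem: an isomorphism $\phi\colon G\to H$ carries the tree $[v,v_e]$ to the tree $[\phi(v),\phi(v_e)]$ --- the \textsc{CBFS} construction depends only on the graph together with its essentially unique embedding, which $\phi$ respects up to the choice between $\pi$ and $\pi^{c}$ --- so corresponding vertices receive equal canons and the sentence is satisfied. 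Since the only departure from Definition \ref{def-dynamiccomplexityclass} is the polynomial initialization, membership in $\textsf{DynFO}^{+}$ follows.

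The main obstacle is not any single step above but the structural insight isolated in Section 6: every standard canonization of a 3-connected planar graph from a \textsc{CBFS} tree proceeds by a depth-first traversal, that is, by computing a transitive closure, which is impossible in $\textsf{FOL}$ by Theorem \ref{thm-Transitive-Closure-not-in-fol}. Definition \ref{def:canon} circumvents this by labelling each vertex with the \emph{set} of (level, normalized-embedding-number) pairs along its root path --- a quantity depending only on ``globally inferable'' properties of the vertex, hence first-order definable from $Level$, $CPath$ and $Emb_{p}$ --- at the cost of a non-succinct but still polynomial-size label. Verifying that this set-label is genuinely complete (determines the graph up to isomorphism) and that all the auxiliary relations really are $\textsf{FO}$-definable from the maintained vocabulary $\tau$ is where the care is needed; everything else is bookkeeping layered on the already-established $\textsf{DynFO}$ maintenance results.
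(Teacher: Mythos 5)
Your proposal is correct and follows essentially the same route as the paper: handle state \emph{B} by polynomial precomputation using \cite{datta2009planar}, maintain the embedding, face, level, \textsc{CBFS}-edge and \textsc{CBFS}-path relations in $\textsf{DynFO}$ during state \emph{A}, build the set-valued canon of Definition \ref{def:canon} by a first-order query from $Level$, $CPath$, and the normalized embedding, and decide isomorphism by a first-order sentence that matches canons over a choice of root edge and orientation. The only points where you are slightly more explicit than the paper are the appeal to Whitney's theorem in the completeness direction and the remark that the auxiliary relations are themselves $\textsf{L}$-computable for re-initialization, but these are expansions of, not departures from, the paper's argument.
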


\section{Conclusions}

We have proven that Breadth-First Search for undirected graphs can be
performed in $\textsf{DynFO}$ and isomorphism for Planar 3-connected
graphs can be decided in $\textsf{\textsf{\textsf{\textsf{DynFO}}}}^{+}$.
A natural extension is to show that Planar Graph Isomorphism is in
$\textsf{DynFO}$. Though even parallel algorithms for this problem
are known \cite{datta2009planar}, the ideas cannot be directly employed
because of myriad problems arising due to automorphisms of the bi/tri-connected
component trees (which are used in \cite{datta2009planar}), and various
subroutines that require computing the transitive closure. In spite
of these shortcomings, we strongly believe that Planar Graph Isomorphism
is in $\textsf{\textsf{\textsf{\textsf{DynFO}}}}$, though the exact
nature of the queries still remains open.

\section{Acknowledgements }

The author sincerely thanks Samir Datta for fruitful discussions and
critical comments on all topics ranging from the problem statement
to the preparation of the final manuscript. \\
\\

\bibliographystyle{plain}
\phantomsection\addcontentsline{toc}{section}{\refname}\bibliography{paper}

\begin{thebibliography}{10}

\bibitem{arvind2002graph}
Vikraman Arvind and Piyush~P Kurur.
\newblock Graph isomorphism is in spp.
\newblock In {\em Foundations of Computer Science, 2002. Proceedings. The 43rd
  Annual IEEE Symposium on}, pages 743--750. IEEE, 2002.

\bibitem{Cormen:2001:IA:580470}
Thomas~H. Cormen, Clifford Stein, Ronald~L. Rivest, and Charles~E. Leiserson.
\newblock {\em Introduction to Algorithms}.
\newblock McGraw-Hill Higher Education, 2nd edition, 2001.

\bibitem{datta20083}
Samir Datta, Nutan Limaye, and Prajakta Nimbhorkar.
\newblock 3-connected planar graph isomorphism is in log-space.
\newblock {\em arXiv preprint arXiv:0806.1041}, 2008.

\bibitem{datta2009planar}
Samir Datta, Nutan Limaye, Prajakta Nimbhorkar, Thomas Thierauf, and Fabian
  Wagner.
\newblock Planar graph isomorphism is in log-space.
\newblock In {\em Computational Complexity, 2009. CCC'09. 24th Annual IEEE
  Conference on}, pages 203--214. IEEE, 2009.

\bibitem{diestel2005graph}
Reinhard Diestel.
\newblock Graph theory. 2005, 2005.

\bibitem{dong1995incremental}
GZ~Dong and JW~Su.
\newblock Incremental and decremental evaluation of transitive closure by
  first-order queries.
\newblock {\em Information and Computation}, 120(1):101--106, 1995.

\bibitem{etessami1998dynamic}
Kousha Etessami.
\newblock Dynamic tree isomorphism via first-order updates to a relational
  database.
\newblock In {\em Proceedings of the seventeenth ACM SIGACT-SIGMOD-SIGART
  symposium on Principles of database systems}, pages 235--243. ACM, 1998.

\bibitem{hesse2002dynamic}
William Hesse.
\newblock The dynamic complexity of transitive closure is in dyntc0.
\newblock In {\em In Proceedings of the 8th International Conference on
  Database Theory (2001}. Citeseer, 2002.

\bibitem{hesse2003dynamic}
WILLIAM~M HESSE.
\newblock Dynamic computational complexity.
\newblock {\em Computer Science}, 2003.

\bibitem{hopcroft1974linear}
John~E Hopcroft and Jin-Kue Wong.
\newblock Linear time algorithm for isomorphism of planar graphs (preliminary
  report).
\newblock In {\em Proceedings of the sixth annual ACM symposium on Theory of
  computing}, pages 172--184. ACM, 1974.

\bibitem{immerman1999descriptive}
Neil Immerman.
\newblock {\em Descriptive complexity}.
\newblock Springer, 1999.

\bibitem{lindell1992logspace}
Steven Lindell.
\newblock A logspace algorithm for tree canonization.
\newblock In {\em Proceedings of the twenty-fourth annual ACM symposium on
  Theory of computing}, pages 400--404. ACM, 1992.

\bibitem{patnaik1994dyn}
Sushant Patnaik and Neil Immerman.
\newblock Dyn-fo (preliminary version): a parallel, dynamic complexity class.
\newblock In {\em Proceedings of the thirteenth ACM SIGACT-SIGMOD-SIGART
  symposium on Principles of database systems}, pages 210--221. ACM, 1994.

\bibitem{schwentick2013perspectives}
Thomas Schwentick.
\newblock Perspectives of dynamic complexity.
\newblock In {\em Logic, Language, Information, and Computation}, pages 33--33.
  Springer, 2013.

\bibitem{thierauf2010isomorphism}
Thomas Thierauf and Fabian Wagner.
\newblock The isomorphism problem for planar 3-connected graphs is in
  unambiguous logspace.
\newblock {\em Theory of Computing Systems}, 47(3):655--673, 2010.

\bibitem{whitney1933set}
H.~Whitney.
\newblock {A set of topological invariants for graphs}.
\newblock {\em American Journal of Mathematics}, 55(1):231--235, 1933.

\end{thebibliography}

\smallskip{}

\appendix
\begin{appendices}

\section{Arithmetic}

\subsection{Relations: Sum}

\subsubsection{$insert(a,b)$}

\label{alg-summations-dynfo}

Updating the relation $Sum$ during insertion:
\begin{itemize}
\item A tuple $(t,x,y)$ belongs to $Sum'$ if:\\
\\
\begin{lyxgreyedout}
\{case in which there is no element in the universe\}%
\end{lyxgreyedout}
\\
$t=x=y=a\ \wedge\ \neg U(a)\wedge\forall u(O'(a,u)$\\
OR\\
\\
\begin{lyxgreyedout}
\{adding the relation $a+0=a$\}%
\end{lyxgreyedout}
\\
$min\leftarrow min:\ \forall u,\, U(u)\Rightarrow O(min,u)$\\
$t=a\wedge((x=a\wedge y=min)\vee(x=min\wedge y=a))$\\
OR\\
\\
$max\gets max:$$\forall u,\, U(u)\Rightarrow O(u,max)$\\
\begin{lyxgreyedout}
\{$max$ is the maximum element. Note that if $a$ is a new element
inserted in the universe, $max+1=a$\}%
\end{lyxgreyedout}
\\
$Sum(t,x,y)\vee(t=a\wedge(Sum(max,x-1,y)\vee Sum(max,x,y-1)))$\\

\end{itemize}
(Back to Section \ref{to-summation-dynfo})

\section{Breadth-First Search}

\subsection{Relations: Level, BFSEdge, BFSPath}

\subsubsection{$insert(a,b)$ }

\label{alg-bfs-insert}

The relations $Level$, $BFSEdge$, and $BFSPath$ are updated during
insertion a follows: \{The queries are illustrated in Figure \ref{figA:-bfs-insert}\}
\begin{itemize}
\item A tuple $(v,x,l)$ belongs to $Level'$, if:\\
${\displaystyle l_{old}\leftarrow level_{v}(x)}$ %
\begin{lyxgreyedout}
\{$l_{old}\leftarrow\infty$ if there is no $l_{old}$ such that $Level(v,x,l_{old})$
holds\}%
\end{lyxgreyedout}
\\
IF $level_{a}(x)\leq level_{b}(x)$: $\alpha\leftarrow b,$ $\beta\leftarrow a$;
ELSE: $\alpha\leftarrow a,$ $\beta\leftarrow b$\\
$l_{new}\leftarrow level_{v}(\alpha)+1+level_{\beta}(x)$\\
$l=\min(l_{old},l_{new})$\\

\item A tuple $(v,x,y)$ belongs to $BFSEdge'$, if:\\
$\exists w$ such that \\
$l_{old}\leftarrow level_{v}(w)$, $l_{new}\leftarrow level'_{v}(w)$,
\\
IF $level_{a}(w)\leq level_{b}(w)$: $\alpha\leftarrow b,$ $\beta\leftarrow a$;
ELSE: $\alpha\leftarrow a,$ $\beta\leftarrow b$\\
\\
\begin{lyxgreyedout}
\{the level of $w$ did not change and $\{x,y\}$ was on a path from
$v$ to $w$\}%
\end{lyxgreyedout}
\\
$l_{new}=l_{old}$ and $Path(v,v,w,\{x,y\})\wedge BFSEdge(v,x,y)$\\
\textcolor{black}{\emph{OR}}\\
\\
\begin{lyxgreyedout}
\{the level of $w$ changed and $\{x,y\}$ lies on the new path\}%
\end{lyxgreyedout}
\\
$l_{new}<l_{old}$ and \\
$(Path(v,v,\alpha,\{x,y\})\wedge BFSEdge(v,x,y))$ %
\begin{lyxgreyedout}
\{Path from $v$ to $\alpha$\}%
\end{lyxgreyedout}
\\
$\vee(Path(\beta,\beta,w,\{x,y\})\wedge BFSEdge(\beta,x,y))$ %
\begin{lyxgreyedout}
\{Path from $\beta$ to $w$\}%
\end{lyxgreyedout}
\\
$\vee(x=a\wedge y=b)$$\vee(x=b\wedge y=a)$ %
\begin{lyxgreyedout}
\{$\{x,y\}$ is the edge $\{a,b\}$\}%
\end{lyxgreyedout}
\\

\item A tuple $(v,x,y,z)$ belongs to $Path'$, if:\\
$\exists w$ such that \\
$l_{old}\leftarrow level_{v}(w)$, $l_{new}\leftarrow level'_{v}(w)$,
\\
IF $level_{a}(w)\leq level_{b}(w)$: $\alpha\leftarrow b,$ $\beta\leftarrow a$;
ELSE: $\alpha\leftarrow a,$ $\beta\leftarrow b$\\
\\
\begin{lyxgreyedout}
\{the level of $w$ did not change and $\{x,y\}$ was on a path from
$v$ to $w$\}%
\end{lyxgreyedout}
\\
$l_{new}=l_{old}$ and $Path(v,v,w,\{x,y,z\})\wedge Path(v,x,y,z)$\\
OR\\
\\
\begin{lyxgreyedout}
\{the level of $w$ changed, and the vertices $x,y,z$ lie on the
new path\}%
\end{lyxgreyedout}
\\
$l_{new}<l_{old}$ and \\
$(Path(v,v,\alpha,\{x,y,z\})\wedge Path(v,x,y,z))$ %
\begin{lyxgreyedout}
\{All on the path from $v$ to $\alpha$\}%
\end{lyxgreyedout}
\\
$\vee(Path(\beta,\beta,w,\{x,y,z\})\wedge Path(\beta,x,y,z))$ %
\begin{lyxgreyedout}
\{All on the path from $\beta$ to $w$\}%
\end{lyxgreyedout}
\\
$\vee(Path(v,v,\alpha,\{x\})\wedge Path(\beta,\beta,w,\{y,z\})\wedge Path(\beta,\beta,y,z))$
\begin{lyxgreyedout}
\{$x$ on $path_{v}(v,\alpha)$ and $y,z$ on $path_{\beta}(\beta,w)$\}%
\end{lyxgreyedout}
\\
$\vee(Path(v,v,\alpha,\{x,z\})\wedge Path(v,v,z,x)\wedge Path(\beta,\beta,w,y))$
\begin{lyxgreyedout}
\{$x,z$ on $path_{v}(v,\alpha)$ and $y$ on $path_{\beta}(\beta,w)$\}%
\end{lyxgreyedout}
\\
\\
\begin{figure}[h]
\begin{centering}
\includegraphics[scale=0.5]{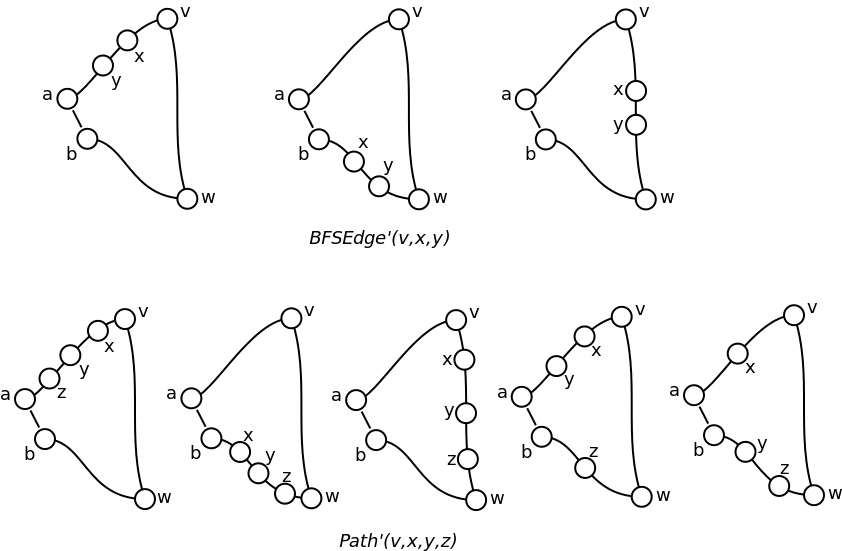}
\par\end{centering}

\caption{\label{figA:-bfs-insert}$BFSEdge$ and $Path$ during insertion of
edge $\{a,b\}$}
\end{figure}

\end{itemize}
(Back to Section \ref{emp-bfs-insert})

\subsubsection{$delete(a,b)$}

\label{alg-bfs-delete}

The relations $Level$, $BFSEdge$, and $BFSPath$ are updated during
deletion are as follows (Refer to Figure \ref{fig3:Deletion-of-edge-bfs}
for illustration of relations used):
\begin{itemize}
\item $R_{2}(v,x)=BFSEdge(v,a,b)\wedge Path(v,v,x,\{a,b\})$\\
$R_{1}(v,y)=\neg R_{2}(v,y)$\\
$PR(v,s,t)=R_{1}(v,s)\wedge R_{2}(v,t)\wedge Edge(s,t)$ %
\begin{lyxgreyedout}
\{All edges connecting $R_{1}$ and $R_{2}$\}%
\end{lyxgreyedout}
\\
$l_{min}(v,w)\leftarrow\min\{level_{v}(s)+1+level_{t}(w):\; PR(v,s,t)\}$
\begin{lyxgreyedout}
\{Length of the new shortest path from $v$ to $w$\}%
\end{lyxgreyedout}
\\
$PR_{min}(v,w,s,t)=R_{2}(v,w)\wedge PR(v,s,t)\wedge(level_{v}(s)+1+level_{t}(w)=l_{min}(v,w))$
\begin{lyxgreyedout}
\{Set of edges that lead to the shortest path\}%
\end{lyxgreyedout}
\\
$PR_{lex,min}(v,w,s,t)=PR_{min}(v,w,s,t)\wedge(s\leq t)\wedge(\forall p,q,\; PR_{min}(v,w,p,q)\Rightarrow(s<p)\vee((s=p)\wedge(t\leq q)))$
\\
\begin{lyxgreyedout}
\{Choosing the lexicographically smallest edge. $PR_{lex,min}$ is
the set of new edges that will be added. The queries are now exactly
similar to insertion of edges\}%
\end{lyxgreyedout}

\item A tuple $(v,x,l)$ belongs to $Level'$ if:\\
\\
\begin{lyxgreyedout}
\{$\{a,b\}$ did not belong to $v$'s BFS tree\}%
\end{lyxgreyedout}
\\
$\neg BFSEdge(v,a,b)\wedge Level(v,x,l)$\\
OR\\
$BFSEdge(v,a,b)\wedge l=l_{min}(v,x)$\\

\item A tuple $(v,x,y)$ belongs to $BFSEdge'$ if:\\
$\exists w$ such that\\
\\
\begin{lyxgreyedout}
\{$\{a,b\}$ was not and $\{x,y\}$ was on the path from $v$ to $w$\}%
\end{lyxgreyedout}
\\
$\neg Path(v,v,w,\{a,b\})\wedge Path(v,v,w,\{x,y\})\wedge BFSEdge(v,x,y)$\\
OR\\
\\
\begin{lyxgreyedout}
\{$\{x,y\}$ lies on the new path from $v$ to $w$\}%
\end{lyxgreyedout}
\\
$p,r\leftarrow PR_{lex,min}(v,w,p,r)$\\
$(Path(v,v,w,\{a,b\})\wedge Path(v,v,p,\{x,y\})\wedge BFSEdge(v,x,y))$
\begin{lyxgreyedout}
\{edge $\{x,y\}$ is on $path_{v}(v,p)$\}%
\end{lyxgreyedout}
\\
$\vee(Path(r,r,w,\{x,y\})\wedge BFSEdge(r,x,y))$ %
\begin{lyxgreyedout}
\{edge $\{x,y\}$ is on $path_{r}(r,w)$\}%
\end{lyxgreyedout}
\\
$\vee((x=p)\wedge(y=r))\vee((x=r)\wedge(y=p))$ %
\begin{lyxgreyedout}
\{$\{x,y\}$ is the edge $\{p,r\}$\}%
\end{lyxgreyedout}
\\

\item A tuple $(v,x,y,z)$ belongs to $Path'$ if:\\
$\exists w$ such that\\
\\
\begin{lyxgreyedout}
\{path from $v$ to $w$ is unchanged, and $z$ is on path from $x$
to $y$ in BFS tree of $v$\}%
\end{lyxgreyedout}
\\
$\neg Path(v,v,w,\{a,b\})\wedge Path(v,v,w,\{x,y,z\})\wedge Path(v,x,y,z)$\\
OR\\
\\
\begin{lyxgreyedout}
\{$\{x,y,z\}$ lies on the new path from $v$ to $w$\}%
\end{lyxgreyedout}
\\
$p,r\leftarrow PR_{lex,min}(v,w,p,r)$\\
$(Path(v,v,w,\{a,b\})\wedge Path(v,v,p,\{x,y,z\})\wedge Path(v,x,y,z))$
\{All of $x,y,z$ on the path from $v$ to $p$\}\\
$\vee(Path(r,r,w,\{x,y,z\})\wedge Path(r,x,y,z))$ \{All of $x,y,z$
on the path from $r$ to $w$\}\\
$\vee(Path(v,v,p,\{x\})\wedge Path(r,r,w,\{y,z\})\wedge Path(r,r,y,z))$
\{$x$ on $path_{v}(v,p)$ and $y,z$ on $path_{r}(r,w)$\}\\
$\vee(Path(v,v,p,\{x,z\})\wedge Path(v,v,z,x)\wedge Path(r,r,w,y))$
\{$x,z$ on $path_{v}(v,p)$ and $y$ on $path_{r}(r,w)$\}\\

\end{itemize}
(Back to Section \ref{emp-bfs-delete})

\section{Canonical Breadth-First Search}

\subsection{Canonical Breadth-First Search Method}

\label{alg-cbfs-method}

\begin{algorithm}[H]
$Queue\leftarrow null$;

$Enqueue(v)$;

Add $(v,v_{e})$ to the \textsc{CBFS} tree;

WHILE $!Queue.empty()$

~~~~~$u\leftarrow Dequeue()$; 

~~~~~$u_{p}\leftarrow u.parent()$; 

\begin{lyxgreyedout}
~~~~~\{where $v.parent()$ is let to be $v_{e}$ for ease of
code, though $v$'s parent is actually $null$\} %
\end{lyxgreyedout}

~~~~~$k\leftarrow\pi_{u}(u_{p})$; 

~~~~~$k\leftarrow(k+1)$ mod $d_{u}$; 

~~~~~$u'\leftarrow\pi_{u}^{-1}(k)$; 

~~~~~WHILE $u'\not=u_{p}$ 

~~~~~~~~~~IF $u'$ is not visited

~~~~~~~~~~~~~~~Add $(u,u')$ to the \textsc{CBFS} tree; 

~~~~~~~~~~~~~~~Mark $u'$ as visited;

~~~~~~~~~~$Enqueue(u')$;

~~~~~~~~~~$k\leftarrow(k+1)$ mod $d_{u}$; 

~~~~~~~~~~$u'\leftarrow\pi_{u}^{-1}(k)$; 

\caption{Canonical Breadth-First-Search Method for $(v,v_{e})$}
\end{algorithm}

(Back to Section \ref{emp-cbfs-method})

\subsection{Relations: Emb, Face}

\subsubsection{$insert(a,b)$ }

\label{alg-emb,face-insert}

Updating the relations $Emb$ and $Face$ during insertion. \{Refer
to Figure \ref{figB:Splitting-(during-insertion)} for an illustration
and intuition for the following queries\}\\

\begin{itemize}
\item $f_{ab}\leftarrow Face(f_{ab},a,b,b)$ %
\begin{lyxgreyedout}
\{Face on which both $a$ and $b$ lie\}%
\end{lyxgreyedout}
\\
$a_{2}\leftarrow Edge(a,a_{2})\wedge(\forall z,Face(f_{ab},a_{2},a,z)\Rightarrow z=a\vee z=a_{2})$\\
$b_{2}\leftarrow Edge(b,b_{2})\wedge(\forall z,Face(f_{ab},b_{2},b,z)\Rightarrow z=b\vee z=b_{2})$\\
$n_{a_{2}}\leftarrow Emb(a,a_{2},n_{a_{2}})$\\
$n_{b_{2}}\leftarrow Emb(b,b_{2},n_{b_{2}})$\\

\item A tuple $(v,x,n_{x})$ belongs to $Emb'$ if:\\
\\
\begin{lyxgreyedout}
\{$v$ is not affected by the insertion of $\{a,b\}$\}%
\end{lyxgreyedout}
\\
$(v\not=a)\wedge(v\not=b)\wedge Emb(v,x,n_{x})$\\
OR\\
\\
\begin{lyxgreyedout}
\{the tuple has the new embedding number given by $a$ to $b$ or
$b$ to $a$\}%
\end{lyxgreyedout}
\\
$(v,x,n_{x})=(a,b,n_{a_{2}})\vee(b,a,n_{b_{2}})$\\
OR\\
\\
\begin{lyxgreyedout}
\{the tuple represents a vertex around $a$ or $b$ whose embedding
number has not changed\}%
\end{lyxgreyedout}
\\
$((v=a)\wedge Emb(v,x,n_{x})\wedge(n_{x}<n_{a_{2}}))$\\
$\vee((v=b)\wedge Emb(v,x,n_{x})\wedge(n_{x}<n_{b_{2}}))$\\
OR\\
\\
\begin{lyxgreyedout}
\{the tuple represents a vertex around $a$ or $b$ whose embedding
number has increased by 1\}%
\end{lyxgreyedout}
\\
$((v=a)\wedge Emb(v,x,n_{x}-1)\wedge(n_{x}\geq n_{a_{2}}))$\\
$\vee((v=b)\wedge Emb(v,x,n_{x}-1)\wedge(n_{x}\geq n_{b_{2}}))$\\

\item A tuple $(f,x,y,z)$ belongs to $Face'$ if:\\
\\
\begin{lyxgreyedout}
\{the face was not the one on which both $a$ and $b$ were there\}%
\end{lyxgreyedout}
\\
$f\not=f_{ab}\wedge Face(f,x,y,z)$\\
OR\\
\\
\begin{lyxgreyedout}
\{the face is split into 2 faces\}%
\end{lyxgreyedout}
\\
$S_{1}(z)=Face(f_{ab},a,b,z)$ %
\begin{lyxgreyedout}
\{Splitting all vertices into 2 sets, for each new face\}%
\end{lyxgreyedout}
\\
$S_{2}(z)=Face(f_{ab},b,a,z)$\\
$f_{p}\leftarrow f_{ab}$ %
\begin{lyxgreyedout}
\{Name for the first face\}%
\end{lyxgreyedout}
\\
$F(u)=$$\forall x,y,z,\neg Face(u,x,y,z)$\\
$f_{q}\leftarrow F(f_{q})\wedge\forall x,F_{q}(x)\Rightarrow f_{q}\leq x$
\begin{lyxgreyedout}
\{choosing the lexicographically smallest available label\}%
\end{lyxgreyedout}
\\
$(f=f_{p}\wedge S_{1}(\{x,y,z\}))\vee(f=f_{q}\wedge S_{2}(\{x,y,z\}))$\\
\begin{figure}[H]
\begin{centering}
\includegraphics[scale=0.5]{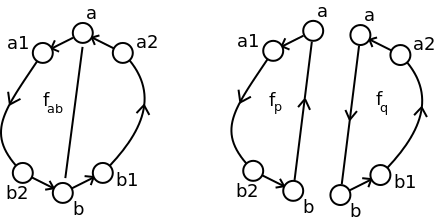}
\par\end{centering}

\caption{\label{figB:Splitting-(during-insertion)}Splitting (during insertion)
and Merging (during deletion) of face(s)}
\end{figure}

\end{itemize}
(Back to Section \ref{emp-emb-insert})

\subsubsection{$delete(a,b)$}

\label{alg-emb,face-delete}

Updating the relations $Emb$ and $Face$ during deletion. \{Refer
to Figure \ref{figB:Splitting-(during-insertion)} for an illustration
and intuition for the following queries\}\\

\begin{itemize}
\item $f_{p}=Face(f_{p},a,b,b)\wedge(\forall x,Face(f_{p},b,a,x)\Rightarrow x=a\vee x=b)$\\
$f_{q}=Face(f_{q},a,b,b)\wedge(\forall x,Face(f_{q},a,b,x)\Rightarrow x=a\vee x=b)$
\begin{lyxgreyedout}
\{Finding the two faces on which the edge $\{a,b\}$ lies\}%
\end{lyxgreyedout}
\\
$f_{ab}=\min(f_{p},f_{q})$ %
\begin{lyxgreyedout}
\{Name for the new combined face\}%
\end{lyxgreyedout}
\\
$n_{a}\leftarrow Emb(b,a,n_{a})$\\
$n_{b}\leftarrow Emb(a,b,n_{b})$\\

\item A tuple $(v,x,n_{x})$ belongs to $Emb'$ if:\\
\\
\begin{lyxgreyedout}
\{$v$ is not affected by the deletion of $\{a,b\}$\}%
\end{lyxgreyedout}
\\
$(v\not=a)\wedge(v\not=b)\wedge Emb(v,x,n_{x})$\\
OR\\
\\
\begin{lyxgreyedout}
\{the tuple represents a vertex around $a$ or $b$ whose embedding
number has not changed\}%
\end{lyxgreyedout}
\\
$((v=a)\wedge Emb(v,x,n_{x})\wedge(n_{x}<n_{b}))$\\
$\vee((v=b)\wedge Emb(v,x,n_{x})\wedge(n_{x}<n_{a}))$\\
OR\\
\\
\begin{lyxgreyedout}
\{the tuple represents a vertex around $a$ or $b$ whose embedding
number has decreased by 1\}%
\end{lyxgreyedout}
\\
$((v=a)\wedge Emb(v,x,n_{x}+1)\wedge(n_{x}>n_{b}))$\\
$\vee((v=b)\wedge Emb(v,x,n_{x}+1)\wedge(n_{x}>n_{a}))$\\

\item A tuple $(f,x,y,z)$ belongs to $Face'$ if:\\
\\
\begin{lyxgreyedout}
\{every tuple in the old relation\}%
\end{lyxgreyedout}
\\
$(f\not=f_{p}\wedge f\not=f_{q}\wedge Face(f,x,y,z))$\\
$\vee(f=f_{ab}\wedge(Face(f_{p},x,y,z)\vee Face(f_{q},x,y,z)))$ \{transferring
all tuples of $f_{p}$ and $f_{q}$ to $f_{ab}$\}\\
OR\\
\\
\begin{lyxgreyedout}
\{the tuple forms paths between vertices from one face to those in
the other\}%
\end{lyxgreyedout}
\\
$(f=f_{ab}\wedge Face(f_{p},a,x,x)\wedge Face(f_{q},b,y,z))$ \{$x$
in $f_{p}$, $y,z$ in $f_{q}$\}\\
$\vee(f=f_{ab}\wedge Face(f_{q},b,x,x)\wedge Face(f_{p},a,y,z))$
\{$x$ in $f_{q}$, $y,z$ in $f_{p}$\}\\
$\vee(f=f_{ab}\wedge Face(f_{p},a,z,x)\wedge Face(f_{q},b,y,y))$
\{$x,z$ in $f_{p}$, $y$ in $f_{q}$\}\\
$\vee(f=f_{ab}\wedge Face(f_{q},b,z,x)\wedge Face(f_{p},a,y,y))$
\{$x,z$ in $f_{q}$, $y$ in $f_{p}$\}\\
$\vee(f=f_{ab}\wedge Face(f_{p},a,z,x)\wedge Face(f_{q},b,y,y))$
\{$x,z$ in $f_{p}$, $y$ in $f_{q}$\}\\
$\vee(f=f_{ab}\wedge Face(f_{p},a,x,x)\wedge Face(f_{q},b,z,z)\wedge Face(f_{p},x,y,b))$
\{$x$ in $f_{p}$, $z$ in $f_{q}$, $y$ in $f_{p}$\}\\
$\vee(f=f_{ab}\wedge Face(f_{q},b,x,x)\wedge Face(f_{p},a,z,z)\wedge Face(f_{q},x,y,a))$
\{$x$ in $f_{q}$, $z$ in $f_{p}$, $y$ in $f_{q}$\}\\

\end{itemize}
(Back to Section \ref{emp-emb-delete})

\subsubsection{Rotating and Flipping the Embedding}

\label{alg-emb-rotate-flip}

Queries to rotate and flip the embedding in $\textsf{FOL}$.
\begin{itemize}
\item %
\begin{lyxgreyedout}
\{$Deg(v,d_{v})$ holds if the degree of vertex $v$ is $d_{v}$\}%
\end{lyxgreyedout}
\\
$Deg(v,d_{v})=(\forall u,Edge(v,u)\Rightarrow\exists n_{u},Emb(v,u,n_{u})\wedge(n_{u}<d_{v}))\wedge\exists u,n_{u},Edge(v,u)\wedge Emb(v,u,n_{u})\wedge(n_{u}+1=d_{v})$\\

\item %
\begin{lyxgreyedout}
\{$Parent(v,v_{e},x_{p},x)$ denotes that in $[v,v_{e}]$, vertex
$x_{p}$ is the parent of vertex $x$ \}%
\end{lyxgreyedout}
\\
$Parent(v,v_{e},x_{p},x)=\exists l_{p},l,Level(v,x_{p},l_{p})\wedge Level(v,x,l)\wedge(l_{p}+1=l)\wedge CBFSEdges(v,v_{e},x_{p},x)$\\

\item %
\begin{lyxgreyedout}
\{$EmbPar(v,v_{e},x,n_{p})$ denotes that the embedding number of
$x$'s parent in $[v,v_{e}]$ is $n_{p}$\}%
\end{lyxgreyedout}
\\
$EmbPar(v,v_{e},x,n_{p})=\exists x_{p},Parent(v,v_{e},x_{p},x)\wedge Emb(x,x_{p},n_{p})$\\
$Emb_{p}(v,v_{e},t,x,n{}_{x})=Edge(x,t)\wedge\exists n_{p},d_{t},n_{old},Deg(t,d_{t})\wedge EmbPar(v,v_{e},t,n_{p})\wedge Emb(t,x,n_{old})$\\
$\wedge(n_{old}\geq n_{p}\Rightarrow n{}_{x}=n_{old}-n_{p})\wedge(n_{old}<n_{p}\Rightarrow n{}_{x}=n_{old}+d_{x}-n_{p})$\\
$Emb_{f}(v,x,n{}_{x})=\exists n_{old},d_{v},Emb(v,x,n_{old})\wedge Deg(v,d_{v})\wedge(n{}_{x}=d_{v}-1-n_{old})$\\
\\
(Back to Section \ref{emp-emb-rotate})
\end{itemize}

\subsection{Relations: CBFSEdges, CBFSPath}

\subsubsection{$insert(a,b)$}

\label{alg-cbfs-insert}

Updating the relations $CBFSEdges$ and $CBFSPath$ during insertion. 
\begin{itemize}
\item $l_{old}\leftarrow level_{v}(w)$, $l_{new}\leftarrow level{}_{v}(a)+1+level_{b}(w)$,
IF $level_{a}(w)\leq level_{b}(w)$: $\alpha\leftarrow b,$ $\beta\leftarrow a$;
ELSE: $\alpha\leftarrow a,$ $\beta\leftarrow b$\\
$d\leftarrow lca_{v,v_{e}}(w,a)$\\
$d_{1}\leftarrow CPath(v,v_{e},d,w,d_{1})\wedge CBFSEdges(v,v_{e},d,d_{1})$
\\
$d_{2}\leftarrow CPath(v,v_{e},d,a,d_{2})\wedge CBFSEdges(v,v_{e},d,d_{2})$\\
$n_{1}\leftarrow emnum'_{v,v_{e}}(d,d_{1})$\\
$n_{2}\leftarrow emnum'_{v,v_{e}}(d,d_{2})$\\
$C(z)=(level'_{v}(z)=level'_{v}(\beta)+1)\wedge Edge(\beta,z)$\\
$\beta_{e}\leftarrow\min\{emnum'_{v,v_{e}}(\beta,z):C(z)\}$\\

\item A tuple $(v,v_{e},x,y)$ belongs to $CBFSEdges'$, if\\
$\exists w$ such that \\
\\
\begin{lyxgreyedout}
\{$|P_{1}|<|P_{2}|$ or $|P_{1}|=|P_{2}|\wedge P_{1}<_{c}P_{2}$,
and $\{x,y\}$ was on $|P_{1}|$\}%
\end{lyxgreyedout}
\\
$(l_{old}<l_{new})\vee(l_{old}=l_{new}\wedge n_{1}<n_{2})$ and $CPath(v,v_{e},v,w,\{x,y\})\wedge CBFSEdges(v,v_{e},x,y)$\\
OR\\
\\
\begin{lyxgreyedout}
\{$|P_{2}|<|P_{1}|$ or $|P_{1}|=|P_{2}|\wedge P_{2}<_{c}P_{1}$,
and $\{x,y\}$ is on $|P_{2}|$\}%
\end{lyxgreyedout}
\\
$(l_{old}>l_{new})\vee(l_{old}=l_{new}\wedge n_{1}>n_{2})$ and \\
$(CPath(v,v_{e},v,\alpha,\{x,y\})\wedge CBFSEdges(v,v_{e},x,y))$
\begin{lyxgreyedout}
\{Path from $v$ to $\alpha$\}%
\end{lyxgreyedout}
\\
$\vee(CPath(\beta,\beta_{e},\beta,w,\{x,y\})\wedge CBFSEdges(\beta,\beta_{e},x,y))$
\begin{lyxgreyedout}
\{Path from $\beta$ to $w$\}%
\end{lyxgreyedout}
\\
$\vee(x=a\wedge y=b)$$\vee(x=b\wedge y=a)$ %
\begin{lyxgreyedout}
\{$\{x,y\}$ is the edge $\{a,b\}$\}%
\end{lyxgreyedout}
\\

\item A tuple $(v,x,y,z)$ belongs to $CPath'$, if\\
$\exists w$ such that \\
\\
\begin{lyxgreyedout}
\{$|P_{1}|<|P_{2}|$ or $|P_{1}|=|P_{2}|\wedge P_{1}<_{c}P_{2}$,
and $\{x,y,z\}$ were on $|P_{1}|$\}%
\end{lyxgreyedout}
\\
$(l_{old}<l_{new})\vee(l_{old}=l_{new}\wedge n_{1}<n_{2})$ and $CPath(v,v_{e},v,w,\{x,y,z\})\wedge CBFSEdges(v,v_{e},x,y,z)$\\
OR\\
\\
\begin{lyxgreyedout}
\{$|P_{2}|<|P_{1}|$ or $|P_{1}|=|P_{2}|\wedge P_{2}<_{c}P_{1}$,
and $\{x,y,z\}$ are on $|P_{2}|$\}%
\end{lyxgreyedout}
\\
$(l_{old}>l_{new})\vee(l_{old}=l_{new}\wedge n_{1}>n_{2})$ and \\
$(CPath(v,v_{e},v,\alpha,\{x,y,z\})\wedge CPath(v,v_{e},x,y,z))$
\begin{lyxgreyedout}
\{All on the path from $v$ to $\alpha$\}%
\end{lyxgreyedout}
\\
$\vee(CPath(\beta,\beta_{e},w,\{x,y,z\})\wedge CPath(\beta,\beta_{e},x,y,z))$
\begin{lyxgreyedout}
\{All on the path from $\beta$ to $w$\}%
\end{lyxgreyedout}
\\
$\vee(CPath(v,v_{e},v,\alpha,\{x\})\wedge CPath(\beta,\beta_{e},\beta,w,\{y,z\})\wedge CPath(\beta,\beta_{e},\beta,y,z))$
\begin{lyxgreyedout}
\{$x$ on $path_{v,v_{e}}(v,\alpha)$ and $y,z$ on $path_{\beta,\beta_{e}}(\beta,w)$\}%
\end{lyxgreyedout}
\\
$\vee(CPath(v,v_{e},v,\alpha,\{x,z\})\wedge CPath(v,v_{e},v,z,x)\wedge CPath(\beta,\beta_{e},\beta,w,y))$
\begin{lyxgreyedout}
\{$x,z$ on $path_{v,v_{e}}(v,\alpha)$ and $y$ on $path_{\beta,\beta_{e}}(\beta,w)$\}%
\end{lyxgreyedout}
\\

\end{itemize}
(Back to Section \ref{emp-cbfs-insert})

\subsubsection{$delete(a,b)$}

\label{alg-cbfs-delete}
\begin{itemize}
\item $R_{2}(v,v_{e},x)=CBFSEdges(v,v_{e},a,b)\wedge CPath(v,v_{e}v,x,\{a,b\})$\\
$R_{1}(v,v_{e},y)=\neg R_{2}(v,v_{e},y)$\\
$PR(v,v_{e}s,t)=R_{1}(v,v_{e},s)\wedge R_{2}(v,v_{e},t)\wedge Edge(s,t)$
\begin{lyxgreyedout}
\{All edges connecting $R_{1}$ and $R_{2}$\}%
\end{lyxgreyedout}
\\
$l_{min}(v,w)\leftarrow\min\{level_{v}(s)+1+level_{t}(w):\;\bigcup_{v_{e}}PR(v,v_{e},s,t)\}$
\begin{lyxgreyedout}
\{Length of the new shortest path from $v$ to $w$\}%
\end{lyxgreyedout}
\\
$PR_{min}(v,v_{e},w,s,t)=R_{2}(v,v_{e},w)\wedge PR(v,v_{e},s,t)\wedge(level_{v}(s)+1+level_{t}(w)=l_{min}(v,w))$
\begin{lyxgreyedout}
\{Set of edges that lead to the shortest path\}%
\end{lyxgreyedout}
\\
$PR_{<,min}(v,v_{e},w,s,t)=PR_{min}(v,v_{e},w,s,t)\wedge$\\
$(\forall p,q,\; PR_{min}(v,w,p,q)\Rightarrow(path_{v,v_{e}}(v,s)<_{c}path_{v,v_{e}}(v,p))\vee(s=p\wedge emnum_{v,v_{e}}(s,t)\leq emnum_{v,v_{e}}(s,q)))$\\
\begin{lyxgreyedout}
\{$PR_{<,min}$ is the set of new edges that will be added. The queries
are now similar to insertion of edges\}%
\end{lyxgreyedout}
\\

\item A tuple $(v,v_{e},x,y)$ belongs to $CBFSEdges'$, if\\
$\exists w$ such that \\
\\
\begin{lyxgreyedout}
\{$\{a,b\}$ was not on $path_{v,v_{e}}(v,w)$, and $\{x,y\}$ was
on $path_{v,v_{e}}(v,w)$\}%
\end{lyxgreyedout}
\\
$\neg CPath(v,v_{e},v,w,\{a,b\})\wedge CPath(v,v,w,\{s,t\})\wedge CBFSEdges(v,s,t)$\\
OR\\
\\
\begin{lyxgreyedout}
\{$\{x,y\}$ lies on the new path from $v$ to $w$\}%
\end{lyxgreyedout}
\\
$s,t\leftarrow PR_{<,min}(v,v_{e},w,s,t)$\\
$C(z)=(level'_{v}(z)=level'_{v}(t)+1)\wedge Edge(t,z)$\\
$t_{e}\leftarrow\min\{emnum'_{v,v_{e}}(t,z):C(z)\}$\\
$(CPath(v,v_{e},v,w,\{a,b\})\wedge CPath(v,v_{e},v,s,\{x,y\})\wedge CBFSEdges(v,v_{e},x,y))$
\begin{lyxgreyedout}
\{edge $\{x,y\}$ is on $path_{v,v_{e}}(v,s)$\}%
\end{lyxgreyedout}
\\
$\vee(CPath(t,t_{e},t,w,\{s,t\})\wedge CBFSEdges(t,t_{e},x,y))$ %
\begin{lyxgreyedout}
\{edge $\{x,y\}$ is on $path_{t,t_{e}}(t,w)$\}%
\end{lyxgreyedout}
\\
$\vee((x=s)\wedge(y=t))\vee((x=t)\wedge(y=s))$ %
\begin{lyxgreyedout}
\{$\{x,y\}$ is the edge $\{s,t\}$\}%
\end{lyxgreyedout}
\\

\item A tuple $(v,v_{e},x,y,z)$ belongs to $CPath'$, if\\
$\exists w$ such that\\
\\
\begin{lyxgreyedout}
\{path from $v$ to $w$ is unchanged, and $z$ is on path from $x$
to $y$ in $[v,v_{e}]$\}%
\end{lyxgreyedout}
\\
$\neg CPath(v,v_{e},v,w,\{a,b\})\wedge CPath(v,v_{e},v,w,\{x,y,z\})\wedge CPath(v,v_{e},x,y,z)$\\
OR\\
\\
\begin{lyxgreyedout}
\{$\{x,y,z\}$ lies on the new path from $v$ to $w$\}%
\end{lyxgreyedout}
\\
$s,t\leftarrow PR_{<,min}(v,w,s,t)$\\
$C(z)=(level'_{v}(z)=level'_{v}(t)+1)\wedge Edge(t,z)$\\
$t_{e}\leftarrow\min\{emnum'_{v,v_{e}}(t,z):C(z)\}$\\
$(CPath(v,v_{e},v,w,\{a,b\})\wedge CPath(v,v_{e},v,s,\{x,y,z\})\wedge CPath(v,v_{e},x,y,z))$
\begin{lyxgreyedout}
\{All of $x,y,z$ on $path_{v,v_{e}}(v,s)$\}%
\end{lyxgreyedout}
\\
$\vee(CPath(t,t_{e},t,w,\{x,y,z\})\wedge CPath(t,t_{e},x,y,z))$ %
\begin{lyxgreyedout}
\{All of $x,y,z$ on $path_{t,t_{e}}(t,w)$\}%
\end{lyxgreyedout}
\\
$\vee(CPath(v,v_{e},v,s,\{x\})\wedge CPath(t,t_{e},t,w,\{y,z\})\wedge CPath(t,t_{e},t,y,z))$
\begin{lyxgreyedout}
\{$x$ on $path_{v,v_{e}}(v,s)$ and $y,z$ on $path_{t,t_{e}}(t,w)$\}%
\end{lyxgreyedout}
\\
$\vee(CPath(v,v_{e},v,s,\{x,z\})\wedge CPath(v,v_{e},v,z,x)\wedge CPath(t,t_{e},t,w,y))$
\begin{lyxgreyedout}
\{$x,z$ on $path_{v,v_{e}}(v,s)$ and $y$ on $path_{t,t_{e}}(t,w)$\}%
\end{lyxgreyedout}
\\

\end{itemize}
(Back to Section \ref{emp-cbfs-delete})

\end{appendices}

\end{document}